\def\@biblabel#1{}
\begin{document}

\newtheorem{theorem}{Theorem}
\newtheorem{lemma}{Lemma}
\newtheorem{prop}{Proposition}
\newtheorem{cor}{Corollary}
\theoremstyle{definition}
\newtheorem{defn}{Definition}
\newtheorem{remark}{Remark}
\newtheorem{step}{Step}

\newcommand{\Cov}{\mathop {\rm Cov}}
\newcommand{\Var}{\mathop {\rm Var}}
\newcommand{\E}{\mathop {\rm E}}
\newcommand{\const }{\mathop {\rm const }}
\everymath {\displaystyle}

\newcommand{\ruby}[2]{
\leavevmode
\setbox0=\hbox{#1}
\setbox1=\hbox{\tiny #2}
\ifdim\wd0>\wd1 \dimen0=\wd0 \else \dimen0=\wd1 \fi
\hbox{
\kanjiskip=0pt plus 2fil
\xkanjiskip=0pt plus 2fil
\vbox{
\hbox to \dimen0{
\small \hfil#2\hfil}
\nointerlineskip
\hbox to \dimen0{\mathstrut\hfil#1\hfil}}}}

\def\qedsymbol{$\blacksquare$}

\renewcommand{\refname }{References}

\title{
An optimal execution problem in the volume-dependent Almgren--Chriss model
}

\author{Takashi Kato
\footnote{Association of Mathematical Finance Laboratory (AMFiL), 
              2--10 Kojimachi, Chiyoda, Tokyo 102-0083, Japan.
E-mail: \texttt{takashi.kato@mathfi-lab.com}}
}

\date{First Version: January 31, 2017\\
This Version: August 24, 2017}

\maketitle
\thispagestyle{empty}

\begin{abstract}
In this study, we introduce an explicit trading-volume process into the Almgren--Chriss model, which is a standard model for optimal execution. We propose a penalization method for deriving a verification theorem for an adaptive optimization problem. We also discuss the optimality of the volume-weighted average-price strategy of a risk-neutral trader. Moreover, we derive a second-order asymptotic expansion of the optimal strategy and verify its accuracy numerically. 
\\\\
{\bf Keywords}: Optimal execution problem, market trading volume, volume-weighted average price (VWAP), market impact
\end{abstract}

\everymath {\displaystyle}

\section{Introduction}
Optimal execution problems have been of considerable interest in the field of mathematical finance during the past two decades. \cite{Bertsimas-Lo} wrote the seminal paper on optimal execution, and the model introduced by \cite{Almgren-Chriss} is known as a standard model in both theory and practice. \cite{Gatheral-Schied} provides a survey of dynamical models that address execution problems.

When studying execution problems, we must consider the market impact (MI), which is the effect that a trader's investment behavior has on security prices. Several studies have proposed optimal execution models with permanent/temporary MI functions (for details of permanent/temporary MIs, see \cite{Almgren-Chriss, Gatheral-Schied}, and \cite{Holthausen-et-al}, for instance).

Market trading volume (turnover), as a representative index of financial market activity, is another important factor in execution problems. This is despite the fact that several classic studies were not much concerned with it. If the trading volume is high, the security is highly liquid and a trader can liquidate shares of the security easily. As an execution strategy that exploits trading volume, the volume-weighted average price (VWAP) strategy is well known and widely used in practice (see \cite{Madhavan}). The VWAP strategy is an execution strategy whose execution speed is proportional to the trading volume of the relevant security. \cite{Frei-Westray, Gueant-Royer}, and \cite{Konishi} considered how to minimize VWAP slippage (i.e., the replication cost of the VWAP strategy) problems as a type of stochastic control problem.

Although the VWAP strategy is a standard execution strategy, it remains unclear why it is effective in terms of the mathematical theory of optimal execution. One rationalization is to consider execution problems on a ``volume-weighted time line.'' \cite{Gatheral-Schied} state in their Remark~22.7 that ``[a strategy with constant execution speed] can be regarded as a VWAP strategy, $\ldots $ the time parameter $t$ does not measure physical time but volume time, which is a standard assumption in the literature on order execution and market impact.'' However, we should not ignore the uncertainty of the market-trading-volume process: we cannot capture how many shares of a security will be traded until a future time horizon.

\cite{Kato_VWAP_Preprint} introduced an explicit market-trading-volume process as a stochastic process and showed that an optimal strategy for problems involving the minimization of expected execution cost is actually the VWAP strategy. This was for an MI function of a general shape and for fluctuations in security price that were given by the Black--Scholes model.

In contrast, \cite{Kato_JSIAM_VWAP} investigated whether the VWAP strategy is optimal in the generalized Almgren--Chriss (AC) model equipped with volume-dependent temporary MI functions. 
As stated in that study, the VWAP strategy is optimal when the trader is risk neutral and our admissible strategies are static (i.e., deterministic) or anticipatory (i.e., depending on future information). However, only limited attention was paid to extending the argument to a standard adaptive optimization problem.

In this study, we propose a simple penalization method to provide a verification-type theorem for the adaptive optimization problem in the generalized AC model. As an example, we give 
a generalized version of the result obtained heuristically by \cite{Kato_JSIAM_VWAP}. This result states that the expected VWAP strategy is optimal in the time-varying Black--Scholes framework.

We also provide a second-order asymptotic expansion formula for the adaptive optimal strategy (say $\hat{\bf x}^\mathrm {adap}$). We compare numerically the performance of $\hat{\bf x}^{\mathrm {adap}}$ with the expected/exact VWAP strategies defined by \cite{Kato_JSIAM_VWAP} when the market trading volume follows the geometric Ornstein--Uhlenbeck (OU) process. Here, the expected (resp., exact) VWAP strategy, say $\hat{\bf x}^{\mathrm {stat}}$ (resp., $\hat{\bf x}^{\mathrm {ant}}$), is a solution to the static (resp., anticipatory) optimal execution problem. Our numerical results are summarized in Figs.~1--4. We find that when the mean reverting speed of the trading volume is low, the fluctuation of $\hat{\bf x}^{\mathrm {adap}}$ is quite similar to that of 
$\hat{\bf x}^{\mathrm {ant}}$ until close to the terminal time period (note that the fluctuation of $\hat{\bf x}^{\mathrm {ant}}$ is completely proportional to that of the market-trading-volume process: see (\ref{def_exact_VWAP}) in Section~\ref{sec_VWAP}). 
This also means that the form of $\hat{\bf x}^{\mathrm {adap}}$ differs from that of $\hat{\bf x}^{\mathrm {stat}}$: $\hat{\bf x}^{\mathrm {stat}}$ is deterministic and does not fluctuate randomly. However, the execution cost corresponding to $\hat{\bf x}^{\mathrm {adap}}$ is not much different from that incurred by $\hat{\bf x}^{\mathrm {stat}}$, which is easy to implement in practice. In contrast, when the mean reverting speed is high, we see not only that the form of $\hat{\bf x}^{\mathrm {adap}}$ is clearly different from that of $\hat{\bf x}^{\mathrm {stat}}$, but that adaptive optimization leads to a lower execution cost than that with static optimization. These results imply that the trader has an incentive when constructing $\hat{\bf x}^{\mathrm {adap}}$ to expend efforts due to improve the execution algorithm when the market-trading-volume process has a strong mean reverting property.

The rest of this paper is organized as follows. In Section~\ref{sec_model}, we introduce our basic model settings. In Section~\ref{sec_VWAP}, we list the definitions of VWAP execution strategies and discuss their optimality. In Section~\ref{sec_main}, we present our main results. We conclude this paper in Section~\ref{sec_conclusion}. Appendix~\ref{sec_proofs} summarizes the proof of Theorem~\ref{th_verification}. In Appendix~\ref{sec_permanent}, we discuss a minor generalization in which both permanent and temporary MI functions depend on the market trading volume.

\section{Model settings}\label{sec_model}

In this section, we introduce our model of an optimal execution problem. Our model is based on the AC model proposed by \cite{Almgren-Chriss} and generalized by \\
\cite{Gatheral-Schied_AC1} and \cite{Schied_AC2}.

We assume that there is a financial market that consists of a risk-free asset (called cash) and a risky asset (called a security). The price of cash is fixed as $1$, whereas the price of the security fluctuates randomly.To describe these price fluctuations, we introduce a probabilistic model. Let $T > 0$, let $(\Omega , \mathcal {F}, (\mathcal {F}_t)_{0\leq t\leq T}, P)$ be a stochastic basis, and let $(S^0_t)_{0\leq t\leq T}$ be a c\`adl\`ag $(\mathcal {F}_t)_{0\leq t\leq T}$-martingale satisfying $\E [\sup _{0\leq t\leq T}|S^0_t|] < \infty $. Here, $S^0_t$ is regarded as the unaffected price of the security at time $t$. When there is no MI, we can trade the security for the price $S^0_t$. For brevity, we assume that $\mathcal {F}_0$ is trivial, that is, all $\mathcal {F}_0$-measurable random variables are almost surely constant.

Next, we introduce a single trader who has $X_0$ shares of the security at the initial time $t=0$. The trader has to liquidate all the shares before a fixed time horizon $T > 0$. The execution strategy ${\bf x} = (x_t)_{0\leq t\leq T}$ is given as a stochastic process. Here, $x_t$ denotes the execution speed at time $t$. The remainder of the shares of the security is given as \begin{eqnarray}\label{def_X}
X_t = X_0 - \int ^t_0x_r\, dr. 
\end{eqnarray}

Under the given execution strategy ${\bf x} = (x_t)_{0\leq t\leq T}$, the security price $S_t$ is defined as follows: 
\begin{eqnarray}\label{def_S}
S_t = S^0_t - \int ^t_0g(x_r)\, dr - \tilde{g}(v_t, x_t),
\end{eqnarray}
where $g$ (resp., $\tilde{g}$) is a permanent (resp., temporary) MI function and $(v_t)_{0\leq t\leq T}$ is a positive $(\mathcal {F}_t)_{0\leq t\leq T}$-adapted process describing the instantaneous market-trading-volume process. In this study, $g$ is assumed to be a linear function, namely $g(x) = \kappa x$ for some $\kappa > 0$. This assumption is standard in several related studies, such as \cite{Almgren-Chriss, Bertsimas-Lo, Cheng-Wang, Gatheral-Schied_AC1, Kato_OU}, and \cite{Schied_AC2}. Note that the linearity of $g$ is assumed for its tractability at first, and that
Almgren et al.~(2005a,b) suggest its validity from an empirical point of view. Moreover, in \cite{Gatheral}, it is justified from economic point of view, that is, the market admits no dynamic arbitrage (in other words, no price manipulation) only when $g$ is linear. In contrast, 
Kato (2014a,b) focuses on a nonlinear version of $g$ and constructs a mathematical model of optimal execution with convex/S-shaped MI functions. The nonlinearity of $g$ is also discussed by \cite{Alfonsi-Fruth-Schied, Gueant}, and \cite{Kato_FMA} from economic and empirical viewpoints.

The temporary MI function $\tilde{g}$ depends on the execution strategy $x_t$ and the trading volume $v_t$. It is natural that the temporary MI should decrease as the trading volume increases, because a large trading volume implies high market liquidity. One of the simplest such settings is where $\tilde {g}(v_t, x_t)$ is proportional to $x_t$ and inversely proportional to $v_t$. Therefore, for the sake of simplicity, we adopt the following form of $\tilde{g}$:
\begin{eqnarray}\label{tilde_g}
\tilde{g}(v, x) = \frac{\tilde{\kappa }x}{v},
\end{eqnarray}
where $\tilde{\kappa } > 0$. 

Next, we define our objective function. For a given ${\bf x}$, an implementation shortfall (IS) cost is defined as 
\begin{eqnarray}\label{def_cost}
\mathcal {C}({\bf x}) = S_0X_0 - \int ^T_0S_tx_t\, dt.
\end{eqnarray}
Substituting (\ref{def_S}) into (\ref{def_cost}) and applying integration by parts, we get 
\begin{eqnarray}\label{calc_IS_cost}
\mathcal {C}({\bf x}) =
\frac{\kappa X_0^2}{2} -
\int ^T_0X_tdS^0_t + \tilde{\kappa }\int ^T_0\frac{x^2_t}{v_t}\, dt,
\end{eqnarray}
where $(X_t)_t$ is defined in (\ref{def_X}). 

We now define the set of admissible strategies: 
\begin{eqnarray}\label{def_adm_str}
\mathcal {A}(X_0) = \left\{ {\bf x} = (x_t)_{0\leq t\leq T} \ ; \ (\mathcal {F}_t)_{0\leq t\leq T}\mbox {-adapted}, x_t\geq 0 \ \mbox { and } \int ^T_0x_t\, dt = X_0\ \mbox {a.s.}  \right\} .  \ \ 
\end{eqnarray}
Note that the final equality on the right-hand side of the above implies the sell-off condition $X_T = 0$, that is, the trader is prohibited from having any remaining shares of the security at the time horizon $T$. We call an element in $\mathcal {A}(X_0)$ an ``adaptive strategy'' to distinguish it from other types of strategy given below.
Also note that $x_t\geq 0$ implies that ${\bf x}$ does not contain any buy orders. 
See Remark \ref {rem_non_negativity} in Section \ref {sec_main} for details on this condition. 

We are ready to define our optimization problem as the problem of minimizing the expected IS cost: 
\begin{eqnarray}\label{adaptive_optimization}
\hat{J}^\mathrm {adap}(X_0) = \inf _{{{\bf x}}\in \mathcal {A}(X_0)}\E [\mathcal {C}({{\bf x}})]. 
\end{eqnarray}
From (\ref{calc_IS_cost}), we can easily see that the above problem is equivalent to \begin{eqnarray}\label{value_function_adapted}
J(X_0) = \inf _{{{\bf x}}\in \mathcal {A}(X_0)}\E \left[\int ^T_0\frac{x^2_t}{v_t}dt \right] .
\end{eqnarray}
Indeed, it holds that $\hat{J}^\mathrm {adap}(X_0) = \kappa X^2_0/2 + \tilde {\kappa }J(X_0)$.

\section{VWAP strategies}\label{sec_VWAP}

In this section, we briefly introduce VWAP execution strategies. Moreover, we review the results in \cite{Kato_JSIAM_VWAP} to verify the optimality of VWAP strategies in some cases. 

We say that ${\bf x} = (x_t)_{0\leq t\leq T}$ is a VWAP strategy if it holds that 
\begin{eqnarray}\label{def_VWAP_execution}
x_t = \gamma v_t, \ \ t\in [0, T] \ \mbox {a.s.}
\end{eqnarray}
for some $\gamma > 0$. Here, $\gamma $ is called a market involvement ratio. Note that if ${{\bf x}}$ is a VWAP strategy, then the trader's execution VWAP $S^{\mathrm {vwap}}_T({\bf x})$ coincides with the market VWAP $S^{\mathrm {VWAP}}_T$: 
\begin{eqnarray*}
S^{\mathrm {vwap}}_T({\bf x}) = 
\frac{\int ^T_0S_tx_t\, dt}{\int ^T_0x_t\, dt}, \ \ 
S^{\mathrm {VWAP}}_T = 
\frac{\int ^T_0S_tv_t\, dt}{\int ^T_0v_t\, dt}
\end{eqnarray*}
(see \cite{Kato_JSIAM_VWAP} for details).

We note that any (adaptive) admissible strategy ${\bf x}\in \mathcal {A}(X_0)$ cannot become a VWAP strategy in the strict sense. Indeed, if ${\bf x}$ satisfies (\ref{def_VWAP_execution}), the sell-off condition $X_T = 0$ immediately implies that $\gamma = X_0/V_T$, which contradicts the assumption that $x_t\in \mathcal {F}_t$, where $V_t$ is a cumulative trading-volume process: 
\begin{eqnarray}\label{def_Vt}
V_t = \int ^t_0v_r\, dr, \ \ t\in [0, T]. 
\end{eqnarray}
Nevertheless, we place importance on the VWAP strategy as a ``benchmark'' of appropriate execution strategies. Indeed, as stated in Theorem~3 of \cite{Kato_JSIAM_VWAP}, the strategy 
\begin{eqnarray}\label{def_exact_VWAP}
\hat{\bf x}^\mathrm {ant} = (\hat{x}^\mathrm {ant}_t)_{0\leq t\leq T}, \ \ \hat{x}^\mathrm {ant}_t = \frac{X_0v_t}{V_T}, \ \ t\in [0, T]
\end{eqnarray}
is a solution to the ``anticipating'' optimization problem 
\begin{eqnarray*}
\hat{J}^\mathrm {ant}(X_0) = \inf _{{\bf x}\in \mathcal {A}^\mathrm {ant}(X_0)}\E [\mathcal {C}({\bf x})], 
\end{eqnarray*}
where $\mathcal {A}^\mathrm {ant}(X_0)$ is the set of $(\hat{\mathcal {F}}_t)_{0\leq t\leq T}$-adapted processes 
satisfying $\int ^T_0x_t\, dt = X_0$. Here, $(\hat{\mathcal {F}}_t)_{0\leq t\leq T}$ is defined by $\hat{\mathcal {F}}_t = \mathcal {G}_T\vee \mathcal {H}_t$, where $(\mathcal {G}_t)_{0\leq t\leq T}$ (resp., $(\mathcal {H}_t)_{0\leq t\leq T}$) is a filtration generated by $(v_t)_{0\leq t\leq T}$ (resp., $(S^0_t)_{0\leq t\leq T}$)\footnote{
Strictly speaking, to show the optimality of ${\bf x}^{\mathrm {ant}}$ for $\hat{J}^{\mathrm {ant}}(X_0)$, we require an additional condition such that, for instance, $(\mathcal {G}_t)_{0\leq t\leq T}$ and $(\mathcal {H}_t)_{0\leq t\leq T}$ are independent. We omit the details here because the main scope of this paper is the adaptive optimization problem.}.

We call the strategy (\ref{def_exact_VWAP}) an ``exact VWAP strategy.'' If we can use full information on the random variable $V_T$ at time $t = 0$, the exact VWAP strategy is optimal in the sense of minimizing the expected IS cost. However, it is impossible to observe $V_T$ until time $t = T$, and so we cannot implement the exact VWAP strategy in practice.

As a substitute for (\ref{def_exact_VWAP}), we define 
\begin{eqnarray}\label{def_static_VWAP}
\hat{\bf x}^\mathrm {stat} = (\hat{x}^\mathrm {stat}_t)_{0\leq t\leq T}, \ \ \hat{x}^\mathrm {stat}_t = \frac{X_0u_t}{U_T}, \ \ t\in [0, T], 
\end{eqnarray}
where 
\begin{eqnarray*}
u_t = \E \left[ v^{-1}_t\right] ^{-1}, \ \ U_T = \int ^T_0u_t\, dt.
\end{eqnarray*}
Here, $u_t$ gives a harmonic mean of the random variable $v_t$. We call the strategy given by (\ref{def_static_VWAP}) an ``expected VWAP strategy.'' This is a static (i.e., deterministic) strategy, thus we can construct it by using information from only the initial time. Theorem~4 in \cite{Kato_JSIAM_VWAP} implies that the expected VWAP strategy is a solution to the static optimization problem 
\begin{eqnarray*}
\hat{J}^\mathrm {stat}(X_0) = \inf _{{\bf x}\in \mathcal {A}^\mathrm {stat}(X_0)}\E [\mathcal {C}({\bf x})], 
\end{eqnarray*}
where $\mathcal {A}^\mathrm {stat}(X_0)$ is a set of ${\bf x}\in \mathcal {A}(X_0)$ such that $x_t$ is non-random. 

Note that these results do not require any explicit model for the volume process $(v_t)_{0\leq t\leq T}$. Also, for the unaffected price process $(S^0_t)_{0\leq t\leq T}$, we assume only the martingale property. Therefore, the optimality of exact/expected VWAP strategies is robust in the framework of the AC model. 

As for the adaptive optimization problem (\ref{adaptive_optimization})--(\ref{value_function_adapted}), the result in \cite{Kato_JSIAM_VWAP} requires the strong assumption that $(v_t)_{0\leq t\leq T}$ is geometric Brownian motion: 
\begin{eqnarray}\label{GBM}
dv_t = \mu dt + \sigma dB_t, \ \ v_0 > 0. 
\end{eqnarray}
Here, $\mu \in \Bbb {R}$ and $\sigma > 0$ are constants, and $(B_t)_{0\leq t\leq T}$ is one-dimensional $(\mathcal {F}_t)_{0\leq t\leq T}$-Brownian motion. Theorem~6 in \cite{Kato_JSIAM_VWAP} implies, without detailed proof, that the expected VWAP strategy is still optimal as a solution to the adaptive optimization problem, that is 
\begin{eqnarray*}
\hat{J}^\mathrm {adap}(X_0) = \hat{J}^\mathrm {stat}(X_0) = \E [\mathcal {C}(\hat{\bf x}^\mathrm {stat})]. 
\end{eqnarray*}
Therefore, we cannot improve the execution cost by extending the class of admissible strategies from $\mathcal {A}^\mathrm {stat}(X_0)$ to $\mathcal {A}^\mathrm {adap}(X_0)$ in this particular case. 

\begin{remark}\label{rem_tilde_g_general}
Note that the above results can be generalized as the case in which $\tilde{g}$ is given as 
\begin{eqnarray}\label{tilde_g_general}
\tilde{g}(v, x) = k(v)x^\alpha , 
\end{eqnarray}
where $\alpha > 0$ is a constant and $k$ is a positive continuous function. In this case, the optimal strategies of the static/anticipating problems are no longer the VWAP strategies. In Remark~5 of \cite{Kato_JSIAM_VWAP}, we call these optimal strategies ``twisted VWAP strategies.'' 
\end{remark}

\begin{remark}\label{rem_TWAP}
When $v_t$ is a constant, the optimal strategy is to sell at a constant rate (i.e., $x_t = X_0/T$; this is 
the exact solution to the optimization problem in the original AC model). This strategy is called a time-weighted average price (TWAP) strategy. In 
Kato (2014a,b, 2016, 2017), we find similar results that show that the TWAP strategy is the optimal strategy for a risk-neutral trader when the permanent MI function is nonlinear. 
\end{remark}

\begin{remark}\label{rem_volume_time}
As mentioned in Remark~\ref{rem_TWAP}, the TWAP strategy is optimal in the original AC model without considering the market trading volume. Here, we give another interpretation of our model by introducing the concept of ``volume time (stochastic clock)'' (see \cite{Ane-Geman, Geman}, and \cite{Veraat-Winkel}) similarly to \cite{Kato_VWAP_Preprint}. 

We use (\ref{def_X}), (\ref{def_cost}), and (\ref{adaptive_optimization}) to define the value function of the problem of minimizing the expected IS cost. However, instead of by (\ref{def_S}), we assume that the security price process is given by 
\begin{eqnarray}\label{def_S_volume_time}
S_t = \tilde {S}^0_{V_t} - \int ^t_0g\left( \frac{x_r}{v_r}\right) dV_r - \hat{g}\left( \frac{x_r}{v_r}\right) , 
\end{eqnarray}
where $(\tilde S^0_{\tilde {t}})_{\tilde{t}\geq 0}$ is a uniformly integrable $(\tilde{\mathcal {F}}_{\tilde{t}\geq 0})_{\tilde{t}\geq 0}$-martingale and the filtration $(\tilde{\mathcal {F}}_{\tilde{t}})_{\tilde{t}\geq 0}$ is given by $\tilde{\mathcal {F}}_{\tilde{t}} = \mathcal {F}_{V^{-1}_{\tilde{t}}}$ (note that $V^{-1}_{\tilde{t}} := \inf \{ t\geq 0\ ; \ V_t\geq \tilde{t} \} \wedge T$ is an $(\mathcal {F}_t)_{0\leq t\leq T}$-stopping time for each fixed $\tilde{t}$). Here, $g$ (resp., $\hat{g}$) is a permanent (resp., temporary) MI function with respect to the instantaneous market-involvement ratio $x_t/v_t$. We recognize the cumulative trading-volume process $V_t$ as the volume time: the quicker $V_t$ increases, the quicker time passes. The unaffected-security-price process is given as a martingale on the volume timeline rather than on the physical timeline. The permanent MI is also accumulated according to the volume time increment $dV_t$ rather than $dt$. 

It is easy to see that the process $(S^0_t)_{0\leq t\leq T}$ defined by $S^0_t = \tilde{S}^0_{V_t}$ is an $(\mathcal {F}_t)_{0\leq t\leq T}$-martingale. Therefore, when $g$ is a linear function, (\ref{def_S_volume_time}) can be rewritten as (\ref{def_S}) by replacing $\tilde{g}(v_t, x_t)$ with $\hat{g}(x_t/v_t)$. This implies that our model 
(\ref{def_X})--(\ref{adaptive_optimization}) can also be regarded as an optimal execution problem in the AC framework with a stochastic clock whenever the permanent MI function is linear
\footnote{Intuitively, the cost due to a permanent MI seems to be small when the trading volume becomes large. Roughly speaking, this intuition is true for convex $g$ but not true for concave $g$. Indeed, we can rewrite the permanent MI term in (\ref{def_S_volume_time}) as $\int ^t_0g(x_r/v_r)v_r\, dr$, and the derivative of the integrand $g(x/v)v$ with respect to $v$ is $g(x/v) - (x/v)g'(x/v)$, which is non-positive (resp., non-negative) when $g$ is convex (resp., concave). Here, we assume the smoothness of $g$ for brevity. Note that if $g$ is linear, the permanent MI term is independent of trading volume. However, even in this case, a large trading volume lessens the fall in price due to the term of the temporary MI.}. 

In Appendix~\ref{sec_permanent}, we study our model from another perspective, that in which the permanent MI function depends explicitly on the trading volume $v_t$. 
\end{remark}

\section{Main results}\label{sec_main}

\subsection{Analytical solution and corresponding verification theorem}\label{sec_verification}

Firstly, we provide a verification theorem that is useful for finding an adaptive optimal execution strategy for the problem (\ref{value_function_adapted}). 

Because the trading-volume process $(v_t)_{0\leq t\leq T}$ is assumed to be always positive, it is useful to describe the dynamics of the log-volume process $Y_t := \log v_t$ rather than those of $v_t$ itself. Therefore, throughout this subsection, we assume that $Y_t$ satisfies the following stochastic differential equation (SDE): 
\begin{eqnarray*}
dY_t = b(t, Y_t)\, dt + \sigma (t, Y_t)dB_t, 
\end{eqnarray*}
where $b, \sigma : [0, T]\times \Bbb {R}\longrightarrow \Bbb {R}$ are Borel-measurable functions. Note that $(v_t)_{0\leq t\leq T}$ satisfies the following SDE: 
\begin{eqnarray*}
dv_t = \hat{b}(t, v_t)\, dt + \hat{\sigma }(t, v_t)dB_t, 
\end{eqnarray*}
where $\hat{b}(t, v) = v(b(t, \log v) + \sigma (t, \log v)^2/2)$ and $\hat{\sigma }(t, v) = v\sigma (t, \log v)$. 

We list the following conditions. 
\begin{itemize}
 \item [\mbox{[A1]}] $b$ and $\sigma $ are bounded and are Lipschitz continuous, that is, there is a positive constant $K$ such that 
\begin{eqnarray*}
&&|b(t, y)| + |\sigma (t, y)| \leq K ,\\
&&|b(t, y) - b(t, y')| + |\sigma (t, y) - \sigma (t, y')| \leq K|y - y'|
\end{eqnarray*}
for each $t\in [0, T]$ and $y, y'\in \Bbb {R}$. 
 \item [\mbox{[A2]}] For each $\lambda > 0$, there exists a function $W^\lambda \in C^{1, 2}([0, T]\times (0, \infty ))$ such that 
\begin{itemize}
 \item [(i)] $W^\lambda $ is a classical solution to the following partial differential equation (PDE): 
\begin{eqnarray}\label{PDE}
\frac{\partial }{\partial t}W^\lambda + 
\hat{b}(t, v)\frac{\partial }{\partial v}W^\lambda + 
\frac{1}{2}\hat{\sigma }(t, v)^2\frac{\partial ^2}{\partial v^2}W^\lambda = v(W^\lambda )^2, \ \ 
W^\lambda (T, v) = \frac{\lambda }{v}; 
\end{eqnarray}
 \item [(ii)] there are positive constants $C_\lambda $ and $m_\lambda $ such that 
\begin{eqnarray}\label{cond_A2ii}
0\leq W^\lambda (t, v) \leq C_\lambda (1 + v^{m_\lambda } + v^{-m_\lambda }). 
\end{eqnarray}
\end{itemize}
 \item [\mbox{[A3]}] There exists $p > 2$ such that 
\begin{eqnarray*}
\E \left[ \int ^T_0\sup _{\lambda > 0}(x^\lambda _t)^pdt \right]  < \infty , 
\end{eqnarray*}
where ${\bf x}^\lambda = (x^\lambda _t)_{0\leq t\leq T}$ is defined as 
\begin{eqnarray}\label{def_penalized_optimizer}
x^\lambda _t = X_0\exp \left( -\int ^t_0v_sW^\lambda (s, v_s)\, ds\right) v_tW^\lambda (t, v_t). 
\end{eqnarray}
\end{itemize}

Then we have the following theorem. 

\begin{theorem}\label{th_verification}
Assume $[A1]$--$[A3]$. Then the limit $x^\infty _t = \lim _{\lambda \rightarrow \infty }x^\lambda _t$ 
exists $dt\otimes dP$-a.e.~and  it holds that ${\bf x}^\infty = (x^\infty _t)_{0\leq t\leq T}\in \mathcal {A}(X_0)$. Moreover, ${\bf x}^\infty $ is an optimizer of $(\ref{value_function_adapted})$, that is, ${\bf x}^\infty $ is the adaptive optimal execution strategy. 
\end{theorem}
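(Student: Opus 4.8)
The plan is to relax the hard sell-off constraint in (\ref{def_adm_str})--(\ref{value_function_adapted}) by a quadratic penalty and to send the penalty weight $\lambda$ to infinity. For $\lambda>0$ consider the penalized value function, started from time $t$ with $X_t=x$ and $v_t=v$,
\[
V^\lambda(t,x,v)=\inf\ \E\left[\int^T_t\frac{x^2_s}{v_s}\,ds+\lambda\frac{X^2_T}{v_T}\ \Big|\ X_t=x,\ v_t=v\right],
\]
the infimum being over $(\mathcal{F}_s)$-adapted $(x_s)_{t\le s\le T}$ with $x_s\ge 0$ and $\int^T_t x_s\,ds\le x$ a.s. By scaling $V^\lambda(t,x,v)=x^2W^\lambda(t,v)$ for a function $W^\lambda$; its Hamiltonian is $\inf_{\xi\ge 0}\{\xi^2/v-2\xi xW^\lambda\}$, attained at $\xi=vxW^\lambda\ge 0$ since $W^\lambda\ge 0$ by (\ref{cond_A2ii}), and substituting this shows that the Hamilton--Jacobi--Bellman equation for $V^\lambda$ reduces exactly to (\ref{PDE}) (with terminal datum $x^2\lambda/v$), while the closed-loop dynamics $\dot X_s=-v_sX_sW^\lambda(s,v_s)$ integrate to $X_s=X_0\exp(-\int^s_0 v_r W^\lambda(r,v_r)\,dr)$ and hence reproduce exactly the candidate ${\bf x}^\lambda$ of (\ref{def_penalized_optimizer}).

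First I would establish a verification lemma for the penalized problem: applying It\^o's formula to $X^2_sW^\lambda(s,v_s)$ along an arbitrary admissible ${\bf x}$, using (\ref{PDE}) and the identity $x^2_s/v_s-2x_sX_sW^\lambda(s,v_s)+X^2_s v_s W^\lambda(s,v_s)^2=v_s^{-1}\big(x_s-v_sX_sW^\lambda(s,v_s)\big)^2\ge 0$, and localizing the resulting stochastic integral, one obtains $\E[\int^T_t x^2_s/v_s\,ds+\lambda X^2_T/v_T\mid\mathcal{F}_t]\ge x^2W^\lambda(t,v_t)$, with equality precisely along ${\bf x}^\lambda$; thus $V^\lambda=x^2W^\lambda$ and ${\bf x}^\lambda$ is the penalized optimizer. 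Removing the localization uses the polynomial growth bound (\ref{cond_A2ii}) for $W^\lambda$ (together with the attendant interior gradient bound) against the fact that, under $[A1]$, $\log v_s$ has bounded drift and a martingale part of quadratic variation $\le K^2s$, so that $\sup_{0\le s\le T}\E[v^m_s+v^{-m}_s]<\infty$ for every $m>0$; for the feedback control ${\bf x}^\lambda$ one invokes in addition $[A3]$ to make the localized stochastic integral a true martingale. Since $W^\lambda(t,v)$ is the per-unit penalized value, it is manifestly nondecreasing in $\lambda$; moreover it is dominated by the per-unit value $\widetilde W(t,v)$ of the genuine sell-off problem from $(t,v)$, which is finite for $t<T$ (the constant rate $x_s\equiv 1/(T-t)$ already has finite cost because $\int^T_t\E[1/v_s]\,ds<\infty$ by $[A1]$). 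Hence $W^\lambda\uparrow W^\infty$ pointwise with $W^\infty$ finite $dt\otimes dP$-a.e.

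The second step is the limit $\lambda\to\infty$. Because $W^\lambda$ increases in $\lambda$, the process $X^\lambda_t=X_0\exp(-\int^t_0 v_sW^\lambda(s,v_s)\,ds)$ is nonincreasing in $\lambda$ and converges pointwise to some $X^\infty_t\ge 0$, while $v_tW^\lambda(t,v_t)\to v_tW^\infty(t,v_t)<\infty$ a.e.; therefore $x^\lambda_t=X^\lambda_t\, v_tW^\lambda(t,v_t)\to x^\infty_t:=X^\infty_t\, v_tW^\infty(t,v_t)$ as a product of convergent sequences, $dt\otimes dP$-a.e., which is the asserted existence of the limit. Next, $\lambda\,\E[(X^\lambda_T)^2/v_T]\le\E[\int^T_0(x^\lambda_t)^2/v_t\,dt+\lambda(X^\lambda_T)^2/v_T]=X^2_0W^\lambda(0,v_0)\le J(X_0)$, the last inequality because every ${\bf x}\in\mathcal{A}(X_0)$ is admissible for the penalized problem and incurs no penalty, so $\E[(X^\lambda_T)^2/v_T]\le J(X_0)/\lambda\to 0$, and with the monotone convergence of $X^\lambda_T$ and $v_T>0$ this forces $X^\infty_T=0$ a.s. Using $[A3]$ to dominate $x^\lambda_t$ by $\sup_{\lambda>0}x^\lambda_t\in L^1(dt\otimes dP)$, dominated convergence gives $\int^T_0 x^\infty_t\,dt=\lim_{\lambda\to\infty}(X_0-X^\lambda_T)=X_0$ a.s., so ${\bf x}^\infty\in\mathcal{A}(X_0)$. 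Finally, by $[A3]$ and H\"older, $(t,\omega)\mapsto(\sup_{\lambda>0}x^\lambda_t)^2/v_t$ lies in $L^1(dt\otimes dP)$, so dominated convergence together with $\E[\int^T_0(x^\lambda_t)^2/v_t\,dt]\le X^2_0W^\lambda(0,v_0)\le J(X_0)$ yields $\E[\int^T_0(x^\infty_t)^2/v_t\,dt]\le J(X_0)$; since ${\bf x}^\infty\in\mathcal{A}(X_0)$ the reverse inequality is immediate, so ${\bf x}^\infty$ is an optimizer of (\ref{value_function_adapted}).

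The main obstacle will be the verification lemma for the penalized problem — specifically the integrability bookkeeping that removes the localization, i.e.\ balancing the polynomial growth of $W^\lambda$ and its $v$-derivatives in (\ref{cond_A2ii}) against the sub-Gaussian moments of $\log v_s$ implied by $[A1]$, and confirming that the feedback ${\bf x}^\lambda$ turns the relevant stochastic integral into a genuine (not merely local) martingale; a secondary delicate point is the monotone passage $W^\lambda\uparrow W^\infty$ with $W^\infty$ finite $dt\otimes dP$-a.e., on which both the pointwise convergence of ${\bf x}^\lambda$ and the membership ${\bf x}^\infty\in\mathcal{A}(X_0)$ depend.
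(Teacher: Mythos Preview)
Your proposal is correct and follows essentially the same penalization-then-limit route as the paper (Propositions~\ref{prop1}--\ref{prop2} for the penalized verification, then the monotone passage $\lambda\to\infty$). Two minor corrections to your sketch: no gradient bound on $W^\lambda$ is available from $[A2]$ or needed --- the paper localizes via $\tau_R=\inf\{s\ge t:|\log v_s|\ge R\}$, so that on $[t,T\wedge\tau_R]$ all integrands are bounded and the stochastic integral is a true martingale, then sends $R\to\infty$ in the \emph{expectation} identity using only (\ref{cond_A2ii}) and Lemma~\ref{lem_moment}; and $[A3]$ plays no role in the penalized verification (Propositions~\ref{prop1}--\ref{prop2}), only in the $\lambda\to\infty$ step --- indeed the paper uses Fatou rather than your dominated-convergence-plus-H\"older for the final optimality inequality, though your argument is also valid.
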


The proof of Theorem~\ref{th_verification} is given in Appendix~\ref{sec_proofs}. Note that, as proved in Appendix~\ref{sec_proofs}, for each $\lambda > 0$, ${\bf x}^\lambda = (x^\lambda _t)_{0\leq t\leq T}$ defined by (\ref{def_penalized_optimizer}) is an optimizer of the following stochastic control problem: 
\begin{eqnarray}\label{def_J_lambda0}
J^\lambda (X_0) = \inf _{{\bf x}\in \tilde {\mathcal {A}}(X_0)}\E \left[ 
\int ^T_0\frac{x^2_t}{v_t}dt + \frac{\lambda }{v_T}X^2_T\right], 
\end{eqnarray}
where $\tilde {\mathcal {A}}(X_0)$ is a set of adaptive strategies without the sell-off condition, that is, 
\begin{eqnarray*}
\tilde{\mathcal {A}}(X_0) = \left\{ {\bf x} = (x_t)_{0\leq t\leq T}\ ; \ (\mathcal {F}_t)_{0\leq t\leq T}\mbox {-adapted}, x_t\geq 0 \ \mbox { and } \int ^T_0x_t\, dt \leq  X_0\ \mbox {a.s.}  \right\} . 
\end{eqnarray*}
Moreover, it holds that 
\begin{eqnarray}\label{def_J_lambda}
J^\lambda (X_0) = X_0^2W^\lambda (0, v_0) 
\end{eqnarray}
(see Appendix~\ref{sec_proofs} for details). Note that (\ref{def_J_lambda}) implies that $W^\lambda (0, v_0) = J^\lambda (1)$, hence $W^\lambda $ represents the value function corresponding to the optimization problem (\ref{def_J_lambda0}) when the trader has only one share to sell. 

Obviously, $\mathcal {A}(X_0)$ is a subset of $\tilde{\mathcal {A}}(X_0)$. Hence, it holds that 
\begin{eqnarray}\label{rel_J}
J^\lambda (X_0) \leq J(X_0), \ \ \lambda > 0. 
\end{eqnarray} 
Intuitively, the optimal strategy ${\bf x}^\infty $ for the value function $J(X_0)$ is obtained as a limit of the optimizer ${\bf x}^\lambda $ of $J^\lambda (X_0)$. Therefore, if we find a solution to (\ref{PDE}) for each $\lambda $, we can construct the optimizer of (\ref{value_function_adapted}) explicitly by (\ref{def_penalized_optimizer}) and by letting $\lambda \rightarrow \infty $. Also note that the limit $W^\infty (t, v) \equiv \lim _{\lambda \rightarrow \infty }W^\lambda (t, v)$ exists for each $(t, v)\in [0, T)\times (0, \infty )$ (we should take care that $W^\infty (T, v)$ diverges) and that ${\bf x}^\infty $ satisfies (\ref{def_penalized_optimizer}) if we replace $\lambda $ with $\infty $. Moreover, we have the following ordinary differential equation for the process of the remaining shares $X^\infty _t = X_0 - \int ^t_0x^\infty _s\, ds$: 
\begin{eqnarray}\label{ODE_optimal}
x^\infty _t = -\dot {X}^\infty _t = X^\infty _tv_tW^\infty (t, v_t), \ \ 0\leq t < T. 
\end{eqnarray}
Indeed, a straightforward calculation gives us that 
\begin{eqnarray*}
&&X^\infty _t = X_0 - \int ^t_0x^\infty _s\, ds\\
&&= X_0\left( 1 - \int ^t_0\exp \left( -\int ^s_0v_rW^\infty (r, v_r)\, dr\right) v_sW^\infty (s, v_s)\, ds\right) = 
X_0\exp \left( -\int ^t_0v_tW^\infty (t, v_t)\, dt\right),
\end{eqnarray*}
and hence we can write $x^\infty _t = X^\infty _tv_tW^\infty (t, v_t)$ for each $t\in [0, T)$.

\begin{remark}\label{rem_non_negativity} 
In (\ref{def_adm_str}), we require the non-negativity of $(x_t)_{0\leq t\leq T}$. This implies that we do not consider the possibility of buying the security during the selling program. This setting is natural because our focus is a selling execution problem. Indeed, ${\bf x}^\infty $ in Theorem~\ref{th_verification} is actually non-negative because of the assumption $W^\lambda \geq 0$ in [A2](ii). 

However, there are execution models in which optimal selling execution schedules include purchasing orders (see \cite{Alfonsi-Schied-Slynko} for instance). Furthermore, there is a case in which an optimal execution strategy oscillates between buy and sell orders. Such a problem is related to the concept of ``transaction-triggered price manipulation'' (see Definition~22.2 in \cite{Gatheral-Schied}). Hence, it is meaningful to consider the possibility of negative $x_t$. Moreover, in Appendix~\ref{sec_permanent}, we face a situation in which an optimal selling strategy contains buying orders.

In fact, we can relax the admissibility condition as 
\begin{eqnarray}\label{def_adm_str2}
\mathcal {A}(X_0) = \left\{ {\bf x} = (x_t)_{0\leq t\leq T}\in \hat{\mathcal {A}}(0, X_0)\ ; \ \int ^T_0x_t\, dt = X_0\ \mbox {a.s.}\right\} ,
\end{eqnarray}
where 
\begin{eqnarray}\nonumber 
\hat{\mathcal {A}}(t, X) = \bigg\{ {\bf x} = (x_s)_{t\leq s\leq T} &;& 
(\mathcal {F}_s)_{t\leq s\leq T}\mbox {-adapted}, \mathop {\rm essinf }_{s, \omega }x_s(\omega ) > -\infty  
 \mbox { and } \int ^T_tx_s\, ds \leq X\ \mbox {a.s.}  \bigg\} . \\\label{def_adm_str3}
\end{eqnarray}
Note that for each ${\bf x} = (x_t)_{0\leq t\leq T}$ in (\ref{def_adm_str2}), the process $(X_t)_{0\leq t\leq T}$ defined by (\ref{def_X}) is essentially bounded (see Lemma~\ref{bdd_X} in Appendix~\ref{sec_permanent}). Hence, we exclude strongly oscillating execution strategies from our admissible strategies (\ref{def_adm_str2}).

We adopt (\ref{def_adm_str}) as the class of admissible strategies for brevity, but we stress that our main results are valid when we replace the definition of $\mathcal {A}(X_0)$ with (\ref{def_adm_str2}). 

Similarly, to treat the case in which the optimal strategy ${\bf x}^\infty $ takes a negative value, we can generalize (\ref{cond_A2ii}) as follows: 
\begin{eqnarray}\label{cond_A2ii_gen}
-C'\leq W^\lambda (t, v) \leq C_\lambda (1 + v^{m_\lambda } + v^{-m_\lambda }), 
\end{eqnarray}
where $C'$ is a constant that is independent of $t, v$, and $\lambda $. 

\end{remark}

\subsection{Example: time-dependent Black--Scholes model}\label{sec_eg}

In this subsection, we consider the time-dependent Black--Scholes model, namely the case in which 
\begin{eqnarray}\label{general_BS}
b(t, v) = b_t, \ \ \sigma (t, v) = \sigma _t
\end{eqnarray}
are given as deterministic bounded Borel-measurable functions. 


\begin{theorem}\label{th_eg}
Assume $(\ref{general_BS})$. Then it holds that 
\begin{eqnarray*}
J(X_0) = 
\frac{X^2_0}{v_0}\left( \int ^T_0\exp \left( \int ^t_0(b_s - \sigma ^2_s/2)\, ds \right)\, dt \right)^{-1}.
\end{eqnarray*} 
Moreover, strategy ${\bf x}^\infty = (x^\infty _t)_{0\leq t\leq T}$ defined by 
\begin{eqnarray*}
x^\infty _t = \frac{X_0\exp \left( -\int ^T_t(b_s - \sigma ^2_s/2)\, ds \right) }
{\int ^T_0\exp \left( -\int ^T_s(b_r - \sigma ^2_r/2)\, dr \right)\, ds} 
\end{eqnarray*}
is an optimizer of $(\ref{value_function_adapted})$. 
\end{theorem}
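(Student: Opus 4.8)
The plan is to apply Theorem~\ref{th_verification}: the work then splits into (i) exhibiting an explicit solution $W^\lambda$ of the PDE $(\ref{PDE})$ in this model, (ii) checking $[A1]$--$[A3]$, and (iii) letting $\lambda\to\infty$ in $(\ref{def_penalized_optimizer})$, $(\ref{ODE_optimal})$ and $(\ref{def_J_lambda})$ to read off ${\bf x}^\infty$ and $J(X_0)$.

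For (i): under $(\ref{general_BS})$ one has $\hat b(t,v)=v(b_t+\sigma_t^2/2)$ and $\hat\sigma(t,v)=v\sigma_t$, and the terminal datum $\lambda/v$ is homogeneous of degree $-1$ in $v$, so I would try a separated ansatz $W^\lambda(t,v)=\phi^\lambda(t)/v$. Plugging this into $(\ref{PDE})$, all powers of $v$ cancel and one is left with the scalar Riccati equation $\dot\phi^\lambda=(\phi^\lambda)^2+(b_t-\sigma_t^2/2)\phi^\lambda$ on $[0,T]$, $\phi^\lambda(T)=\lambda$. With $a_t:=b_t-\sigma_t^2/2$ and $A(t):=\int_0^t a_s\,ds$, the substitution $\psi^\lambda:=1/\phi^\lambda$ linearizes it to $\dot\psi^\lambda+a_t\psi^\lambda=-1$, $\psi^\lambda(T)=1/\lambda$, whose integrating-factor solution is $\psi^\lambda(t)=e^{-A(t)}(\lambda^{-1}e^{A(T)}+\int_t^T e^{A(s)}\,ds)$; hence $W^\lambda(t,v)=\frac{e^{A(t)}}{v(\lambda^{-1}e^{A(T)}+\int_t^T e^{A(s)}\,ds)}$.

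For (ii): $[A1]$ is immediate, as $b,\sigma$ are bounded by hypothesis and, being independent of $y$, trivially Lipschitz in $y$. For $[A2]$, $W^\lambda$ above solves $(\ref{PDE})$ and is $C^2$ in $v$; the only subtlety is that for merely Borel-measurable $b,\sigma$ the map $A(\cdot)$ is Lipschitz rather than $C^1$, so $W^\lambda$ is Lipschitz (not $C^1$) in $t$, but because of its product form $\phi^\lambda(t)/v$ the generalized It\^o formula still applies to $W^\lambda(t,v_t)$ and the proof of Theorem~\ref{th_verification} carries over (alternatively, prove the continuous-coefficient case first and then approximate, noting that $J$ and ${\bf x}^\infty$ depend on $b,\sigma$ only through $A$). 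Since $\phi^\lambda$ is continuous and strictly positive on $[0,T]$, it is bounded by some $C_\lambda$, so $0\le W^\lambda(t,v)\le C_\lambda/v\le C_\lambda(1+v+v^{-1})$, i.e.\ $(\ref{cond_A2ii})$ holds with $m_\lambda=1$. For $[A3]$, $v_sW^\lambda(s,v_s)=\phi^\lambda(s)$ is deterministic, so $(\ref{def_penalized_optimizer})$ yields a deterministic $x^\lambda$; using $\phi^\lambda=-\frac{d}{ds}\log F^\lambda$ with $F^\lambda(s):=\lambda^{-1}e^{A(T)}+\int_s^T e^{A(r)}\,dr$ one gets $x^\lambda_t=X_0e^{A(t)}/F^\lambda(0)$, which increases in $\lambda$ to the bounded limit $X_0e^{A(t)}/\int_0^T e^{A(s)}\,ds$ (bounded since $A$ is), so $\sup_{\lambda>0}x^\lambda_t$ is bounded on $[0,T]$ and $[A3]$ holds for every $p>2$.

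For (iii): Theorem~\ref{th_verification} then gives that $x^\infty_t=\lim_{\lambda\to\infty}x^\lambda_t=X_0e^{A(t)}/\int_0^T e^{A(s)}\,ds$ is the adaptive optimizer, and rewriting $e^{A(t)}=e^{A(T)}\exp(-\int_t^T a_s\,ds)$ (and similarly in the denominator) gives the stated formula. For $J(X_0)$ one may either pass to the limit in $J^\lambda(X_0)=X_0^2W^\lambda(0,v_0)$, using $A(0)=0$, or compute $J(X_0)=\E[\int_0^T(x^\infty_t)^2/v_t\,dt]$ directly from $\log v_t=\log v_0+\int_0^t b_s\,ds+\int_0^t\sigma_s\,dB_s$ (so that $\E[1/v_t]=e^{-A(t)}/v_0$); both give $J(X_0)=X_0^2/(v_0\int_0^T e^{A(s)}\,ds)$, as claimed. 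The only genuinely delicate step is the regularity caveat in $[A2]$; everything else is an explicit ODE or lognormal-moment computation.
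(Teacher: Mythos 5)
Your proposal is correct and follows essentially the same route as the paper: it produces exactly the paper's explicit $W^\lambda(t,v)=\phi^\lambda(t)/v$ (the paper simply states this formula, together with $J^\lambda$ and $x^\lambda$, and says $[A1]$--$[A3]$ follow by straightforward calculation) and then invokes Theorem~\ref{th_verification} and the limit $\lambda\to\infty$. Your derivation via the Riccati equation, the explicit check of $[A1]$--$[A3]$, and the remark on the $C^{1,2}$-in-$t$ regularity of $W^\lambda$ for merely Borel-measurable $b_t,\sigma_t$ only supply details (and one caveat) that the paper leaves implicit.
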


\begin{proof}
We can verify conditions [A1]--[A3] by a straightforward calculation with 
\begin{eqnarray*}
W^\lambda (t, v) &=& 
\frac{1}{v}\left( \int ^T_t\exp \left( \int ^s_t(b_r - \sigma ^2_r/2)\, dr \right)\, ds + 
\frac{1}{\lambda }\exp \left( \int ^T_t(b_s - \sigma ^2_s/2)\, ds \right) \right)^{-1}, \\
J^\lambda (X_0) &=& X^2_0W^\lambda (0, v_0), \\
x^\lambda _t &=& \frac{X_0\lambda \exp \left( -\int ^T_t(b_s - \sigma ^2_s/2)\, ds \right) }
{1 + \lambda \int ^T_0\exp \left( -\int ^T_s(b_r - \sigma ^2_r/2)\, dr \right)\, ds}. 
\end{eqnarray*}
Our assertion is obtained by using Theorem~\ref{th_verification}. 
\end{proof}

Note that ${\bf x}^\infty $ coincides with the expected VWAP execution strategy, and it holds that $\hat{J}^\mathrm {adap}(X_0) = \hat{J}^\mathrm {stat}(X_0)$. 

\begin{remark}\label{rem_nonlinear_tilde_g}
We can generalize the above result to the case in which $\tilde{g}$ is given as (\ref{tilde_g_general}) with $k(v) = \gamma v^{-\beta }$ for some $\beta \geq 0$ and $\gamma > 0$, and the trading-volume process satisfies $v_t = \bar{u}_t\exp \left( \int ^t_0\sigma _sdB_s \right) $ for some continuous positive function $(\bar{u}_t)_{0\leq t\leq T}$ and bounded Borel-measurable function $(\sigma _t)_{0\leq t\leq T}$. In this case, we see that 
\begin{eqnarray*}
x^\infty _t = \frac{X_0\exp \left( \frac{\beta ^2}{2\alpha }\int ^T_t\sigma ^2_s\, ds \right) \bar{u}_t^{\beta /\alpha }}
{\int ^T_0\exp \left( \frac{\beta ^2}{2\alpha }\int ^T_s\sigma ^2_r\, dr \right)\bar{u}_s^{\beta /\alpha }ds}, 
\end{eqnarray*}
which is also equal to the (twisted) expected VWAP strategy, and it holds that $\hat{J}^\mathrm {adap}(X_0) = \hat{J}^\mathrm {stat}(X_0)$. The details are left to the reader. 
\end{remark}

\subsection{Asymptotic expansion for adaptive optimal strategies}

In Section~\ref{sec_verification}, we introduced the verification theorem to facilitate the derivation of an optimizer of (\ref{value_function_adapted}). Moreover, in Section~\ref{sec_eg}, we obtained an analytical solution to the adaptive optimization problem with the generalized Black--Scholes model. However, it is still difficult to find an optimal strategy in the general case. 

If $(v_t)_{0\leq t\leq T}$ is deterministic, the optimal strategy is obviously the expected VWAP strategy $\hat{\bf x}^\mathrm {stat}$. Hence, we consider deriving an asymptotic expansion formula around $\hat{\bf x}^\mathrm {stat}$. Note that the arguments in this subsection are only formal ones; in future work, we intend to seek mathematical justification. 

We consider the following perturbed volume process with a small parameter $\varepsilon > 0$: 
\begin{eqnarray*}
v_t = \bar{u}_t\exp \left( \varepsilon Z^{0, 0}_t \right) , 
\end{eqnarray*}
where $(\bar{u}_t)_{0\leq t\leq T}$ is a deterministic continuous positive function and $(Z^{t, z}_s)_{t\leq s\leq T}$ is a stochastic process that satisfies the following SDE: 
\begin{eqnarray*}
dZ^{t, z}_s = \alpha (s, Z^{t, z}_s)\, ds + \beta (s, Z^{t, z}_s)dB_s, \ \ Z^{t, z}_t = z 
\end{eqnarray*}
for some adequate functions $\alpha (s, z)$ and $\beta (s, z)$. Here, the term $\varepsilon Z^{t, z}_s$ describes a small noise on the trading-volume process. 

Let $W^{\varepsilon , \lambda }(t, z)$ be a classical solution to the following PDE: 
\begin{eqnarray}\label{PDE2}
\frac{\partial }{\partial t}W^{\varepsilon , \lambda } + 
\alpha (t, z)\frac{\partial }{\partial z}W^{\varepsilon , \lambda } + 
\frac{1}{2}\beta (t, z)^2\frac{\partial ^2}{\partial z^2}W^{\varepsilon , \lambda } = \bar{u}_te^{\varepsilon z}(W^{\varepsilon , \lambda })^2, \ \ 
W^{\varepsilon , \lambda } (T, z) = \lambda . 
\end{eqnarray}
Note that $J^{\varepsilon , \lambda }(X_0) = X^2_0W^{\varepsilon , \lambda }(0, 0)$ is given as the following value function: 
\begin{eqnarray*}
J^{\varepsilon , \lambda }(X_0) = \inf _{{\bf x}\in \tilde {\mathcal {A}}(X_0)}\E \left[ 
\int ^T_0\frac{x^2_t}{v_t}dt + \lambda X^2_T\right]. 
\end{eqnarray*}
We see easily that $W^{0, \lambda }(t, z) = W^{0, \lambda }(t) = (\bar{U}_T - \bar{U}_t + 1/\lambda )^{-1}$, where $\bar{U}_t = \int ^t_0\bar{u}_s\, ds$. We consider the formal expansion 
\begin{eqnarray}\label{asy_exp}
W^{\varepsilon , \lambda }(t, z) = W^{0, \lambda }(t, z) + 
\varepsilon I^{1, \lambda }(t, z) + \varepsilon ^2I^{2, \lambda }(t, z) + \cdots 
\end{eqnarray}
for small $\varepsilon > 0$. Substituting (\ref{asy_exp}) for (\ref{PDE2}), we formally obtain 
\begin{eqnarray*}
&&\left( \frac{\partial }{\partial t} + \alpha (t, z)\frac{\partial }{\partial z} + \frac{1}{2}\beta (t, z)^2\frac{\partial ^2}{\partial z^2}\right) 
\left(  W^{0, \lambda } + 
\varepsilon I^{1, \lambda } + \varepsilon ^2I^{2, \lambda } + \cdots \right) \\
&=& 
\bar{u}_t\left( 1 + \varepsilon z + \frac{1}{2}\varepsilon ^2z^2 + \cdots \right) 
\left(  W^{0, \lambda } + 
\varepsilon I^{1, \lambda } + \varepsilon ^2I^{2, \lambda } + \cdots \right)^2 .
\end{eqnarray*}
Expanding both sides and comparing coefficients of $\varepsilon $ and $\varepsilon ^2$, we obtain 
\begin{eqnarray*}
&&\frac{\partial }{\partial t}I^{1, \lambda } + \alpha (t, z)\frac{\partial }{\partial z}I^{1, \lambda } + 
\frac{1}{2}\beta (t, z)\frac{\partial ^2}{\partial z^2}I^{1, \lambda } = 
z\bar{u}_t(W^{0, \lambda }(t))^2 + 2\bar{u}_tW^{0, \lambda }(t)I^{1, \lambda }, \\
&&\frac{\partial }{\partial t}I^{2, \lambda } + \alpha (t, z)\frac{\partial }{\partial z}I^{2, \lambda } + 
\frac{1}{2}\beta (t, z)\frac{\partial ^2}{\partial z^2}I^{2, \lambda }\\&& = 
\bar{u}_t\left\{ \frac{1}{2}z^2(W^{0, \lambda }(t))^2 + 2zW^{0, \lambda }(t)I^{1, \lambda }(t, z) + (I^{1, \lambda }(t, z))^2 \right\} + 2\bar{u}_tW^{0, \lambda }(t)I^{2, \lambda }
\end{eqnarray*}
and $I^{1, \lambda }(T, z) = I^{2, \lambda }(T, z) = 0$. We then apply the Feynman--Kac formula to obtain 
\begin{eqnarray*}
I^{1, \lambda }(t, z) &=& -\E \left[ \int ^T_tZ^{t, z}_s(W^{0, \lambda }(s))^2\bar{u}_s\exp \left( -2\int ^s_tW^{0, \lambda }(r)\bar{u}_r\, dr \right)\, ds  \right] , \\
I^{2, \lambda }(t, z) &=& -\E \Bigg[ \int ^T_t\Big\{ \frac{1}{2}(Z^{t, z}_s)^2(W^{0, \lambda }(s))^2 + 2Z^{t, z}_sW^{0, \lambda }(s)I^{1, \lambda }(s, Z^{t, z}_s) + (I^{1, \lambda }(s, Z^{t, z}_s))^2\Big\} \\
&&\hspace{20mm}\times \bar{u}_s\exp \left( -2\int ^s_tW^{0, \lambda }(r)\bar{u}_r\, dr \right)\, ds  \Bigg] . 
\end{eqnarray*}
Letting $\lambda \rightarrow \infty $, 
we get the following formal expansion formula: 
\begin{eqnarray}\label{asy_exp_W}
W^{\varepsilon }(t, z) = \frac{1}{\bar{U}_T - \bar{U}_t} + 
\varepsilon I^{1}(t, z) + \varepsilon ^2I^{2}(t, z) + \cdots , 
\end{eqnarray}
where 
\begin{eqnarray*}
I^1(t, z) &=& 
-\frac{1}{(\bar{U}_T - \bar{U}_t)^2}\int ^T_tm(s, t, z)\bar{u}_s\, ds, \\
I^2(t, z) &=& 
-\frac{1}{(\bar{U}_T - \bar{U}_t)^2}\int ^T_t\left\{ \frac{1}{2}A_1(s, t, z) + A_2(s, t, z) - 2A_3(s, t, z)\right\} \bar{u}_s\, ds, \\
m(s, t, z) &=& \E [Z^{t, z}_s], \\
A_1(s, t, z) &=& \E [(Z^{t, z}_s)^2], \\
A_2(s, t, z) &=& \frac{1}{(\bar{U}_T - \bar{U}_s)^2}\E \left [\left (\int ^T_sm(r, s, Z^{t, z}_s)\bar{u}_r\, dr\right )^2\right ], \\
A_3(s, t, z) &=& \frac{1}{\bar{U}_T - \bar{U}_s}\int ^T_s\E [Z^{t, z}_sm(r, s, Z^{t, z}_s)]\bar{u}_r\, dr. 
\end{eqnarray*}
Substituting (\ref{asy_exp_W}) for (\ref{ODE_optimal}), we get the following second-order approximation formula for the optimal adaptive strategy: 
\begin{eqnarray}\nonumber 
x^\infty _t &=& X^\infty _t v_tW^\varepsilon (t, Z^{0, 0}_t)\\\label{approx_adaptive}
&\approx & 
X^\infty _t v_t\left\{ \frac{1}{\bar{U}_T - \bar{U}_t} + 
\varepsilon I^{1}(t, Z^{0, 0}_t) + \varepsilon ^2I^{2}(t, Z^{0, 0}_t) \right\} , 
\end{eqnarray}
or, equivalently, 
\begin{eqnarray*}
x^\infty _t &\approx & X_0\exp \left( -\int ^t_0v_s\left\{ \frac{1}{\bar{U}_T - \bar{U}_s} + 
\varepsilon I^{1}(s, Z^{0, 0}_s) + \varepsilon ^2I^{2}(s, Z^{0, 0}_s)\right\} ds\right) \\
&&\times 
v_t\left\{ \frac{1}{\bar{U}_T - \bar{U}_t} + 
\varepsilon I^{1}(t, Z^{0, 0}_t) + \varepsilon ^2I^{2}(t, Z^{0, 0}_t)\right\} . 
\end{eqnarray*}
To align the notation with $\hat{\bf x}^\mathrm {ant}$ and $\hat{\bf x}^\mathrm {stat}$, we also denote $\hat{\bf x}^\mathrm {adap} = (\hat{x}^\mathrm {adap}_t)_{0\leq t\leq T} = (x^\infty _t)_{0\leq t\leq T}$. We also get the approximation formula for $\hat {J}^\mathrm {adap}(X_0)$: 
\begin{eqnarray*}
\hat {J}^\mathrm {adap}(X_0) \approx 
X^2_0\left\{ \frac{\kappa }{2} + \tilde{\kappa }\left( 
\frac{1}{\bar{U}_T} + \varepsilon I^1(0, 0) + \varepsilon ^2I^2(0, 0) \right) \right\} . 
\end{eqnarray*}

Note again that the above derivation is only formal, so the accuracy of the approximation is not guaranteed at this stage. Therefore, we examine the accuracies and properties of the approximated adaptive optimal strategies by means of numerical experiments. We consider the case in which the noise process follows the Ornstein--Uhlenbeck (OU) process, that is, $\alpha (t, z) = -\rho z$ and $\beta (t, z) \equiv \sigma $ for some constants $\rho , \sigma > 0$. In this case, the approximation terms are given as 
\begin{eqnarray*}
I^1(t, z) &=& 
-\frac{z}{(\bar{U}_T - \bar{U}_t)^2}\int ^T_te^{-\rho (s-t)}\bar{u}_s\, ds, \\
I^2(t, z) &=& 
-\frac{1}{(\bar{U}_T - \bar{U}_t)^2}\int ^T_t\left\{ z^2e^{-2\rho (s-t)} + \frac{\sigma ^2}{2\rho }(1 - e^{-2\rho (s-t)})\right\} \left( \hat{U}^2_s - \frac{1}{2}\right) \bar{u}_s\, ds, 
\end{eqnarray*}
where 
\begin{eqnarray*}
\hat{U}_s = 1 - \frac{1}{\bar{U}_T - \bar{U}_s}\int ^T_se^{-\rho (r-s)}\bar{u}_r\, dr.
\end{eqnarray*}

We set the parameters as $\kappa = 0.0001, \tilde {\kappa } = 0.01$, $T = 1$, $X_0 = 10$, and $\sigma = 0.3$. For $\rho $, we examine the three patterns with $\rho = 0.3, 2$, and $5$. The parameter $\varepsilon $ is chosen in the interval $[0, 1]$. We also assume that $\bar{u}_t = 100$ throughout. We use numerical calculations involving the Euler--Maruyama approximation to compare the performances of three execution strategies, namely the expected VWAP strategy (\ref{def_static_VWAP}), the adaptive (approximated) optimal strategy (\ref{approx_adaptive}), and the exact VWAP strategy (\ref{def_exact_VWAP}).

Firstly, we examine the case of $\rho = 0.3$. When $\rho $ is small, the process $(v_t)_t$ fluctuates in a similar manner to geometric Brownian motion, thus the value of $\hat{J}^\mathrm {adap}(X_0)$ is expected to be close to $\hat{J}^\mathrm {stat}(X_0)$ (see Theorem~\ref{th_eg}). The results are summarized in 
Fig.~1, where we see that $\hat{J}^\mathrm {adap}(X_0)$ is quite similar to $\hat{J}^\mathrm {stat}(X_0)$. This result suggests that our approximation method is accurate even for $\varepsilon$ close to $1$. Interestingly, although the expected IS cost of $\hat{\bf x}^\mathrm {adap}$ is close to that of $\hat{\bf x}^\mathrm {stat}$, the forms of these strategies are different. 
Figure~2 shows the sample paths of the three strategies $\hat{\bf x}^\mathrm {stat}$, $\hat{\bf x}^\mathrm {adap}$, and $\hat{\bf x}^\mathrm {ant}$ for $\varepsilon = 0.3$. The fluctuation of $\hat{\bf x}^\mathrm {adap}$ is more similar to that of $\hat{\bf x}^\mathrm {ant}$ than that of $\hat {\bf x}^\mathrm {stat}$, especially when $t$ is small. As long as we follow adaptive strategies, we cannot observe the final value of $V_T$. Hence, the fluctuation of $\hat{\bf x}^\mathrm {adap}$ becomes unstable as $t$ approaches $T = 1$ to comply with the sell-off condition $\int ^T_0\hat{x}^\mathrm {adap}_t\, dt = X_0$. As a consequence, $\hat{\bf x}^\mathrm {adap}$ cannot improve the expected IS cost relative to $\hat{J}^\mathrm {stat}(X_0)$ in this case. 

\begin{figure}[!h]
\begin{center}
\includegraphics[height = 6cm,width=12cm]{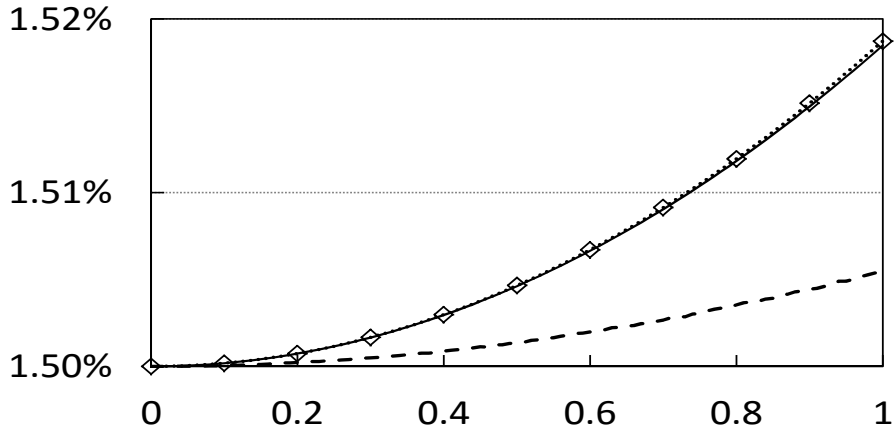}
\caption{Expected IS costs corresponding to the static/adaptive/anticipating optimal strategies via the parameter $\varepsilon \in [0, 1]$ for $\rho = 0.3$. The vertical axis corresponds to the cost value (solid line: $\hat{J}^\mathrm {adap}(X_0)$; diamond marked dotted line: $\hat{J}^\mathrm {stat}(X_0)$; dashed line: $\hat{J}^\mathrm {ant}(X_0)$). The horizontal axis corresponds to $\varepsilon $. }
\label{fig_rho=0.3}
\end{center}
\end{figure}

\begin{figure}[!h]
\begin{center}
\includegraphics[height = 6cm,width=12cm]{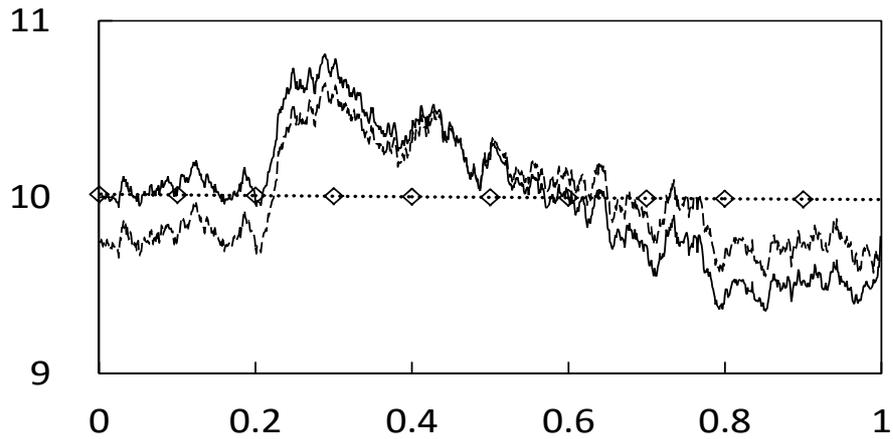}
\caption{Sample paths of the static/adaptive/anticipating optimal strategies for $\rho = 0.3$ and $\varepsilon = 0.3$. The vertical axis corresponds to the execution speed of each strategy (solid line: $\hat{\bf x}^\mathrm {adap}$; diamond marked dotted line: $\hat{\bf x}^\mathrm {stat}$; dashed line: $\hat{\bf x}^\mathrm {ant}$). The horizontal axis corresponds to time $t$. }
\label{fig_rho=0.3_sim}
\end{center}
\end{figure}

Next, we study the cases of $\rho = 2$ and $5$. 
Figure~3 shows comparisons of the values of $\hat{J}^\mathrm {stat}(X_0)$, $\hat{J}^\mathrm {adap}(X_0)$, and $\hat{J}^\mathrm {ant}(X_0)$ for $\rho = 2$ (left) and $5$ (right). In these cases, we find that $\hat{J}^\mathrm {adap}(X_0)$ is clearly smaller than $\hat{J}^\mathrm {stat}(X_0)$ when $\varepsilon $ is large. In particular, the difference between $\hat{J}^\mathrm {adap}(X_0)$ and $\hat{J}^\mathrm {stat}(X_0)$ becomes clearer with increase of the mean-reverting-speed parameter $\rho $. This is because, when $v_t$ fluctuates considerably, $v_t$ is expected to rapidly approach the mean reverting level; the adaptive strategy can take such information into account. Thus, in these cases, adaptive optimization works better than static optimization. Sample paths corresponding to the optimal strategies are shown in 
Fig.~4. As with the case of $\rho = 0.3$, we observe that $\hat {\bf x}^\mathrm {adap}$ fluctuates in tandem with $\hat {\bf x}^\mathrm {ant}$ until time $t$ approaches the sell-off time $T$.

\begin{figure}[!h]
\begin{center}
\includegraphics[height = 4cm,width=8cm]{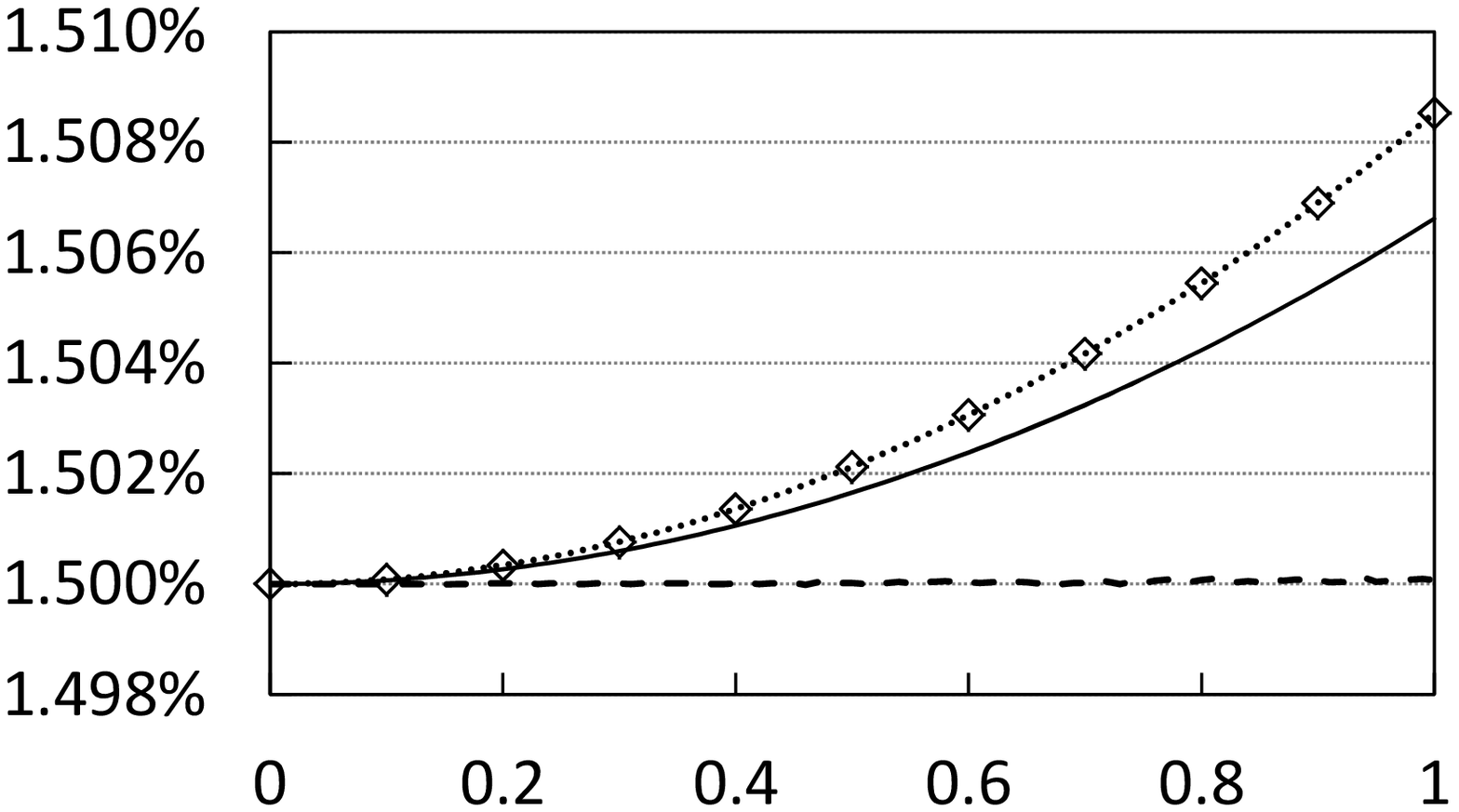}
\includegraphics[height = 4cm,width=8cm]{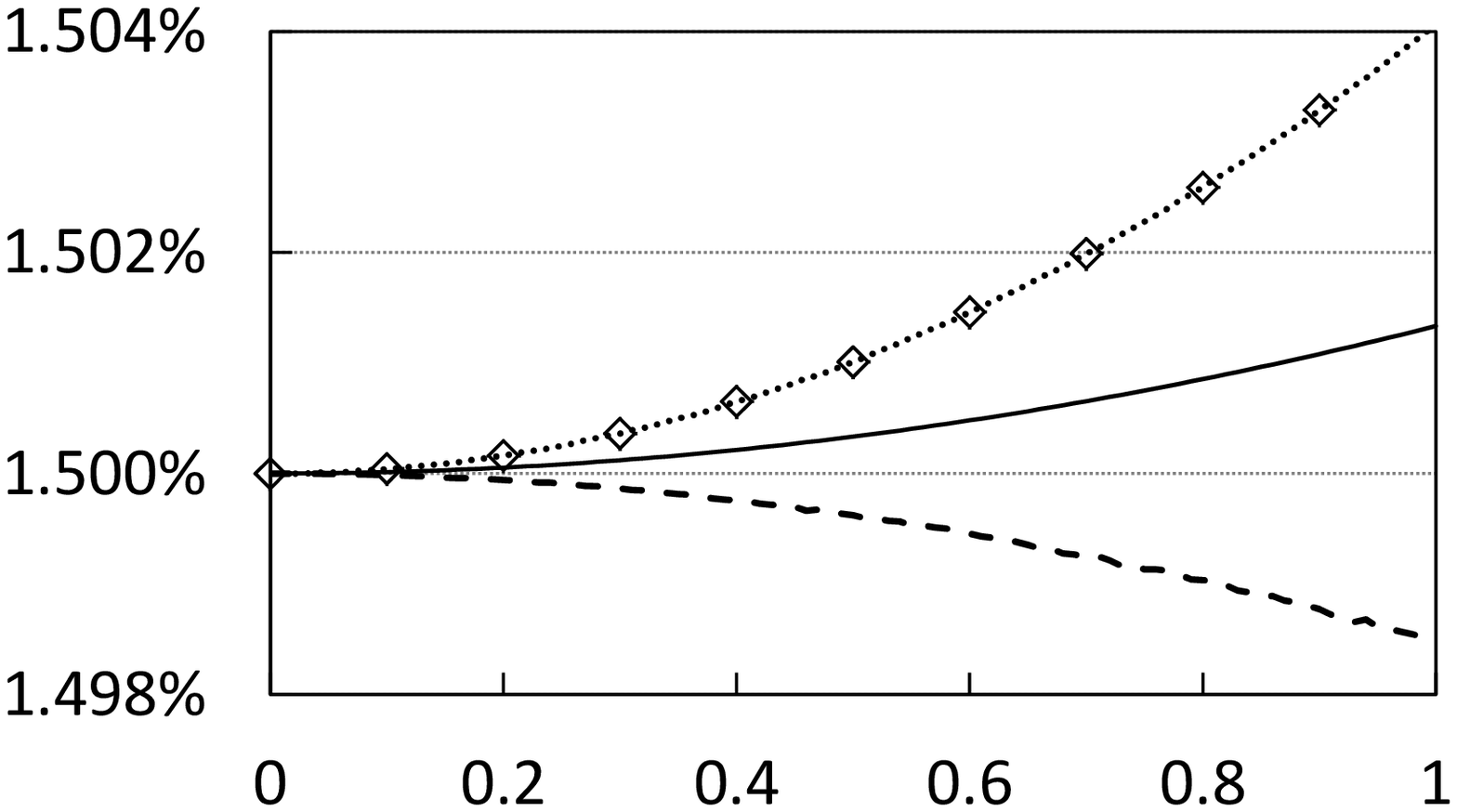}
\caption{Expected IS costs corresponding to the static/adaptive/anticipating optimal strategies via the parameter $\varepsilon \in [0, 1]$ (left: $\rho = 2$; right: $\rho = 5$). The vertical axes correspond to the cost value (solid line: $\hat{J}^\mathrm {adap}(X_0)$; diamond marked dotted line: $\hat{J}^\mathrm {stat}(X_0)$; dashed line: $\hat{J}^\mathrm {ant}(X_0)$). The horizontal axes correspond to $\varepsilon $. }
\label{fig_rho=2_5}
\end{center}
\end{figure}

\begin{figure}[!h]
\begin{center}
\includegraphics[height = 4cm,width=8cm]{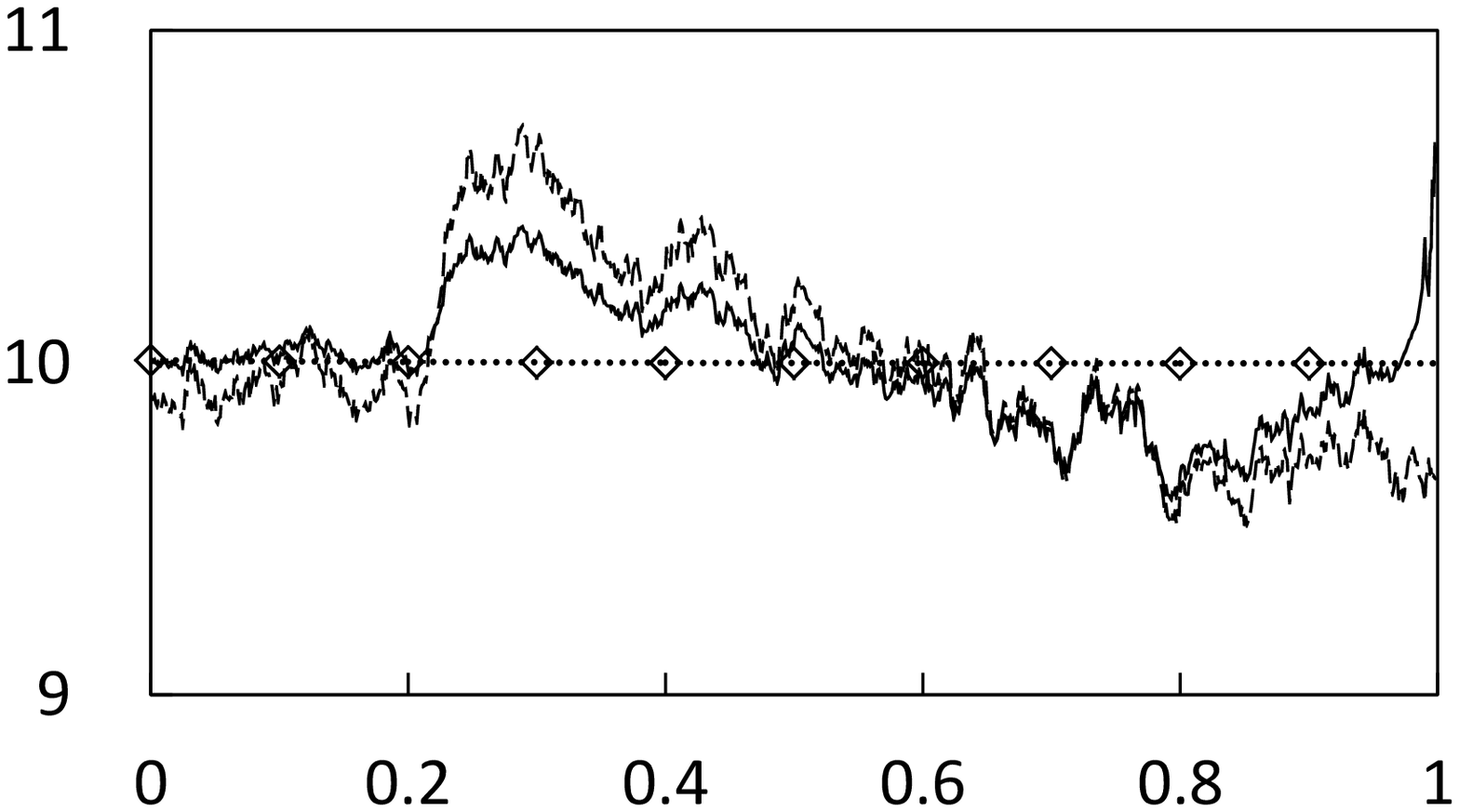}
\includegraphics[height = 4cm,width=8cm]{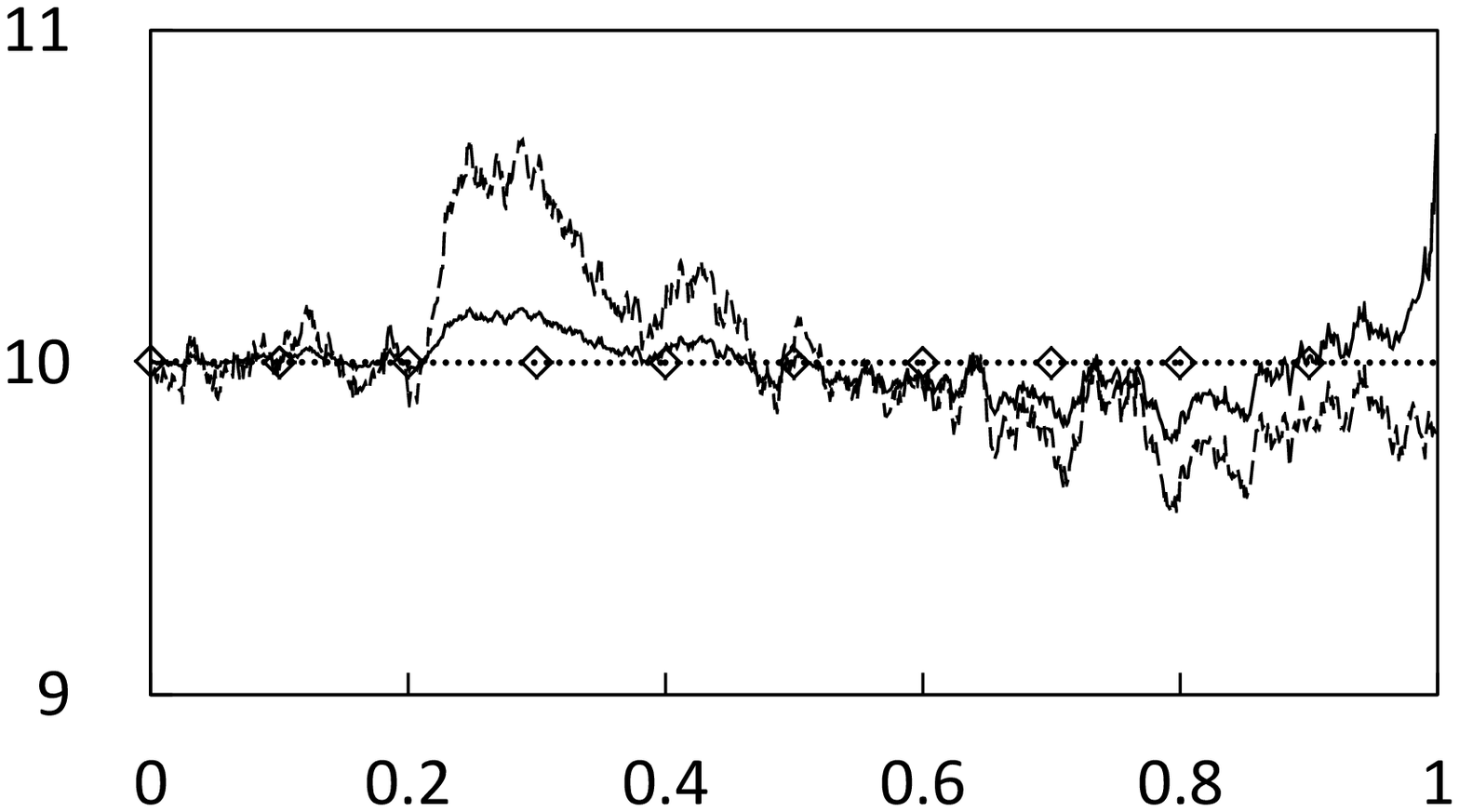}
\caption{Sample paths of the static/adaptive/anticipating optimal strategies for $\varepsilon = 0.3$ (left: $\rho = 2$; right: $\rho = 5$). The vertical axes correspond to the execution speed of each strategy (solid line: $\hat{\bf x}^\mathrm {adap}$; diamond marked dotted line: $\hat{\bf x}^\mathrm {stat}$; dashed line: $\hat{\bf x}^\mathrm {ant}$). The horizontal axes correspond to time $t$. }
\label{fig_rho=2_5_sim}
\end{center}
\end{figure}

\section{Concluding remarks}\label{sec_conclusion}
In this study, we have treated the optimal execution problem in the generalized AC model such that the temporary MI function depends on the market trading volume. We used the verification theorem to derive an adaptive optimal execution strategy, and as an application we showed that the expected VWAP strategy is optimal when the trading-volume process is given as the time-dependent Black--Scholes model. 

It is often found in studies on optimal execution problems for a risk-neutral trader (e.g., \cite{Alfonsi-Fruth-Schied, Gatheral-Schied_AC1, Kato_FS, Kuno-Onishi}, and \cite{Schied-Zhang}) that the adaptive optimal strategy is given by a deterministic process. Hence, there is little incentive to construct a dynamic strategy by updating the execution speed using current information about the random fluctuations of market data with time. This phenomenon is also true in our case in the time-dependent Black--Scholes framework. 

However, our numerical experiments implied that the adaptive optimal strategy is not static in general. When the trading-volume process was given as the geometric OU process, the dynamic (adaptive) optimization improved the expected IS cost compared with the case of static optimization. In particular, when the mean reverting speed was high, we observed a clearer difference between $\hat{J}^\mathrm {adap}(X_0)$ and $\hat{J}^\mathrm {stat}(X_0)$. 

As mentioned in Remark~\ref{rem_volume_time}, our model can be interpreted as the AC model defined on the volume timeline. To see this, the linearity of $g$ is essential. Also, as mentioned in Remarks~\ref{rem_tilde_g_general} and \ref{rem_nonlinear_tilde_g}, our results also work for generally shaped $\tilde{g}$ up to the form (\ref{tilde_g_general}), whereas we cannot avoid the linearity of $g$ in any case. However, 
Kato (2014a,b, 2015, 2017) succeeded in tackling an optimal execution problem in which the MI functions were nonlinear. The challenge remains to study the AC model with a nonlinear form of $g$ that depends on the market trading volume.

\appendix 

\section{Proof of Theorem~\ref{th_verification}}\label{sec_proofs}

In this section, we always assume [A1]--[A3]. The following lemma is immediately obtained by [A1] and Corollary~2.5.10 in \cite{Krylov} for $v^{\pm m}_t$. 

\begin{lemma}\label{lem_moment}
For each $m\geq 1$, there is a constant $C_m > 0$ such that 
\begin{eqnarray*}
&&\E \left[ \sup _{0\leq t\leq T}v^{m}_t\right] \leq C_m(1 + v^{m}_0), \\
&&\E \left[ \sup _{0\leq t\leq T}v^{-m}_t\right] \leq C_m(1 + v^{-m}_0). 
\end{eqnarray*}
\end{lemma}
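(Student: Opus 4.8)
The plan is to exhibit each of the processes $v_t^{m}$ and $v_t^{-m}$ as the solution of a one-dimensional Itô SDE whose drift and diffusion coefficients have at most linear growth in the state variable, and then to quote the standard moment estimate (Corollary~2.5.10 in \cite{Krylov}) for such equations.

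First I would record the dynamics. By [A1] the log-volume $Y_t=\log v_t$ solves $dY_t=b(t,Y_t)\,dt+\sigma(t,Y_t)\,dB_t$ with $|b|,|\sigma|\le K$, so this SDE has a unique strong solution. Applying Itô's formula to $\zeta_t:=v_t^{\pm m}=e^{\pm m Y_t}$ gives
\begin{eqnarray*}
d\zeta_t=\zeta_t\Bigl(\pm m\,b(t,Y_t)+\frac{m^2}{2}\sigma(t,Y_t)^2\Bigr)dt\pm m\,\zeta_t\,\sigma(t,Y_t)\,dB_t.
\end{eqnarray*}
Since $b$ and $\sigma$ are bounded, the drift and diffusion coefficients here are bounded in absolute value by $(mK+m^2K^2/2)\,\zeta_t$; in particular they are of linear growth in the state variable, with growth constant depending only on $K$ and $m$.

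Next I would invoke Corollary~2.5.10 of \cite{Krylov} — a moment estimate which asserts, for an Itô process $\zeta$ whose coefficients are dominated by $C(1+|\zeta|)$, a bound of the form $\E[\sup_{0\le t\le T}|\zeta_t|^{q}]\le N(1+|\zeta_0|^{q})$ with $N=N(C,T,q)$ — and apply it with $q=1$ to each of the two choices of sign. This yields $\E[\sup_{0\le t\le T}v_t^{m}]\le C_m(1+v_0^{m})$ and $\E[\sup_{0\le t\le T}v_t^{-m}]\le C_m(1+v_0^{-m})$, which is the assertion. (If one prefers to cite the estimate only for larger exponents, apply it with $q=2m$ and then use the Cauchy--Schwarz inequality together with $\sqrt{1+a^2}\le 1+a$ for $a\ge 0$.)

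The argument is essentially routine and I do not expect a genuine obstacle; the one point deserving attention is that Itô's formula turns the \emph{bounded} coefficients of the $Y$-equation into the \emph{unbounded} coefficients $\zeta\cdot(\text{bounded})$ of the $\zeta$-equation, so one must pass from ``bounded'' to ``linear growth'' before the cited estimate becomes applicable — but linear growth is precisely the hypothesis under which that estimate holds, so nothing is lost. If a self-contained argument were wanted, one could instead write $\pm m Y_t=\pm m Y_0\pm m\int_0^t b\,ds\pm m M_t$ with $M_t:=\int_0^t\sigma(s,Y_s)\,dB_s$, note $\langle M\rangle_T\le K^2T$, and bound $\E[\sup_{t}e^{\pm m M_t}]$ by applying Doob's $L^p$ inequality ($p>1$) to the exponential submartingale $e^{\theta M_t}$ together with $\E[e^{p\theta M_T}]\le\exp(p^2\theta^2K^2T/2)$.
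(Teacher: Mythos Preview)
Your proposal is correct and follows exactly the approach the paper indicates: the paper's entire proof is the one-line remark that the lemma ``is immediately obtained by [A1] and Corollary~2.5.10 in \cite{Krylov} for $v^{\pm m}_t$,'' and you have simply supplied the natural details (apply It\^o's formula to $\zeta_t=e^{\pm mY_t}$, observe linear growth of the resulting coefficients, and quote the Krylov estimate).
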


Define 
\begin{eqnarray*}
J^\lambda (t, X, v) =  \inf _{{\bf x}\in \tilde {\mathcal {A}}(t, X)}\E \left[ 
\int ^T_t\frac{x^2_s}{v_s}ds + \frac{\lambda }{v_t}\left( X - \int ^T_tx_s\, ds \right)^2\right], 
\end{eqnarray*}
where $(v_s)_{t\leq s\leq T}$ is a solution to 
\begin{eqnarray*}
dv_s = \hat{b}(s, v_s)\, ds + \hat{\sigma }(s, v_s)dB_s \ (s\geq t), \ \ v_t = v 
\end{eqnarray*}
and 
\begin{eqnarray*}
\tilde{\mathcal {A}}(t, X) = 
\left\{ {\bf x} = (x_s)_{t\leq s\leq T}\ ;\ (\mathcal {F}_s)_{t\leq s\leq T}\mbox {-adapted}, 
x_s\geq 0 \ \mbox { and } \int ^T_tx_s\, ds \leq  X\ \mbox {a.s.}  \right\} . 
\end{eqnarray*}
We also put $\bar{J}^\lambda (t, X, v) = X^2W^\lambda (t, v)$. 
%
%
%
%

\begin{prop}\label{prop1}$J^\lambda (t, X, v)\geq \bar{J}^\lambda (t, X, v)$. 
\end{prop}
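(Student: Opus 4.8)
The plan is a standard verification-type lower-bound argument. Fix $(t,X,v)$ and an arbitrary control ${\bf x}=(x_s)_{t\le s\le T}\in\tilde{\mathcal{A}}(t,X)$, with associated state $X_s=X-\int_t^s x_r\,dr$ and volume process $(v_s)_{t\le s\le T}$ solving the SDE started at $v_t=v$ (existence and uniqueness following from [A1]). Set $M_s:=X_s^2W^\lambda(s,v_s)$. Since $(X_s)$ is continuous of finite variation with $\dot X_s=-x_s$ and $W^\lambda\in C^{1,2}([0,T]\times(0,\infty))$ by [A2](i), there is no cross-variation term, and Itô's formula gives
\begin{align*}
dM_s = X_s^2\Big(\tfrac{\partial}{\partial t}W^\lambda+\hat b\,\tfrac{\partial}{\partial v}W^\lambda+\tfrac12\hat\sigma^2\tfrac{\partial^2}{\partial v^2}W^\lambda\Big)(s,v_s)\,ds - 2X_sx_sW^\lambda(s,v_s)\,ds + X_s^2\hat\sigma(s,v_s)\tfrac{\partial}{\partial v}W^\lambda(s,v_s)\,dB_s.
\end{align*}
Replacing the parenthesised operator by $v_s\big(W^\lambda(s,v_s)\big)^2$ via the PDE (\ref{PDE}), and using the elementary identity
\begin{align*}
\frac{x_s^2}{v_s} + X_s^2 v_s\big(W^\lambda(s,v_s)\big)^2 - 2X_sx_sW^\lambda(s,v_s) = \frac{1}{v_s}\big(x_s - X_sv_sW^\lambda(s,v_s)\big)^2 \ge 0,
\end{align*}
we integrate from $t$ to $T$ and invoke the terminal condition $W^\lambda(T,\cdot)=\lambda/(\cdot)$ to obtain the pathwise bound
\begin{align*}
\int_t^T\frac{x_s^2}{v_s}\,ds + \frac{\lambda}{v_T}X_T^2 \ge X^2W^\lambda(t,v) + \int_t^T X_s^2\hat\sigma(s,v_s)\tfrac{\partial}{\partial v}W^\lambda(s,v_s)\,dB_s.
\end{align*}
Taking expectations (once the stochastic integral is controlled) and then the infimum over ${\bf x}\in\tilde{\mathcal{A}}(t,X)$ yields $J^\lambda(t,X,v)\ge X^2W^\lambda(t,v)=\bar J^\lambda(t,X,v)$.

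The one delicate point, which I expect to be the main obstacle, is that [A2](ii) bounds $W^\lambda$ but says nothing about $\partial_v W^\lambda$, so the $dB$-term above need not be a true martingale. I would resolve this by localization: put $\tau_n=\inf\{s\ge t: v_s\notin(1/n,n)\}\wedge T$. Because $(v_s)$ has continuous, strictly positive paths on the compact interval $[t,T]$, one has $\tau_n=T$ for all sufficiently large $n$, almost surely. On $[t,\tau_n]$ the integrand $X_s^2\hat\sigma(s,v_s)\partial_vW^\lambda(s,v_s)$ is bounded — since $0\le X_s\le X$, while $\hat\sigma$ and $\partial_vW^\lambda$ are continuous, hence bounded on $[0,T]\times[1/n,n]$ — so the stopped stochastic integral has zero mean, and
\begin{align*}
\E\!\left[\int_t^{\tau_n}\frac{x_s^2}{v_s}\,ds\right] \ge X^2W^\lambda(t,v) - \E\!\left[X_{\tau_n}^2W^\lambda(\tau_n,v_{\tau_n})\right].
\end{align*}
Letting $n\to\infty$, the left-hand side converges to $\E[\int_t^T x_s^2/v_s\,ds]$ by monotone convergence, while $X_{\tau_n}^2W^\lambda(\tau_n,v_{\tau_n})\to X_T^2W^\lambda(T,v_T)=\lambda X_T^2/v_T$ by continuity of $W^\lambda$ together with path continuity of $(X_s)$ and $(v_s)$; this family is dominated by $X^2C_\lambda\big(1+\sup_{s}v_s^{m_\lambda}+\sup_{s}v_s^{-m_\lambda}\big)$, which is integrable by Lemma~\ref{lem_moment}, so dominated convergence applies. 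We conclude
\begin{align*}
\E\!\left[\int_t^T\frac{x_s^2}{v_s}\,ds + \frac{\lambda}{v_T}X_T^2\right] \ge X^2W^\lambda(t,v),
\end{align*}
and taking the infimum over ${\bf x}\in\tilde{\mathcal{A}}(t,X)$ completes the proof. The remaining items — progressive measurability and integrability of the quantities involved — are routine given [A1] and Lemma~\ref{lem_moment}.
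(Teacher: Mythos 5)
Your argument is correct and follows essentially the same route as the paper: Itô's formula applied to $X_s^2W^\lambda(s,v_s)$, the PDE (\ref{PDE}) combined with completing the square in $x_s$, localization at the exit time of $v_s$ from $[1/n,n]$ (the paper's $\tau_R$ with $R=\log n$), then monotone convergence for the running cost and dominated convergence for the terminal term via the bound in [A2](ii) and Lemma~\ref{lem_moment}. The only cosmetic difference is that the paper splits the terminal expectation over the events $\{\tau_R\geq T\}$ and $\{\tau_R<T\}$ rather than passing to the limit directly, which is immaterial.
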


\begin{proof}
Fix any ${\bf x}\in \tilde {\mathcal {A}}(t, X)$ and set 
\begin{eqnarray}\label {def_X_t}
X_s = X - \int ^s_tx_r\, dr, \ \ t\leq s\leq T. 
\end{eqnarray}
For each $R > 0$, set $\tau _R = \inf \{ s\geq t\ ; \ |\log v_s|\geq R \}$ ($\inf \phi \equiv \infty $). Applying Ito's formula, we see that 
\begin{eqnarray}\nonumber 
&&\int ^{T\wedge \tau _R}_t\frac{x^2_s}{v_s}ds + \bar{J}^\lambda (T\wedge \tau _R, X_{T\wedge \tau _R}, v_{T\wedge \tau _R})\\\nonumber 
&=& 
\bar{J}^\lambda (t, X, v) + 
\int ^{T\wedge \tau _R}_t\Bigg [ \Bigg \{ \frac{\partial }{\partial s} + \hat{b}\frac{\partial }{\partial v} + \frac{1}{2}\hat{\sigma }^2\frac{\partial ^2}{\partial v^2}\Bigg \} \bar{J}^\lambda (s, X_s, v_s) + \frac{x^2_s}{v_s} - x_s\frac{\partial }{\partial X}\bar{J}^\lambda (s, X_s, v_s)\Bigg ]ds\\\label{temp_calc1}
&& + (\mbox {martingale}),
\end{eqnarray}
including the case of ``$\infty = \infty $'' (note that both sides of the above equality may diverge to $\infty $ because of the integral of $x^2_s/v_s\geq 0$). 
We notice that 
\begin{eqnarray}\nonumber 
\frac{x^2}{v} - x\frac{\partial }{\partial X}\bar{J}^\lambda (s, X, v) 
&=& 
\frac{1}{v}\left( x - vXW^\lambda (s, v) \right) ^2 - vX^2(W^\lambda (s, v))^2
\\\label{temp_min}
&\geq & 
-vX^2(W^\lambda (s, v))^2 
\end{eqnarray}
for each $x\geq 0$ (note that (\ref{temp_min}) is also valid for all $x\in \Bbb {R}$). Combining this with [A2] and (\ref{temp_calc1}), we obtain 
\begin{eqnarray}\label{temp_calc2}
\E \left[ \int ^{T\wedge \tau _R}_t\frac{x^2_s}{v_s}ds\right]  + \E \left[ \bar{J}^\lambda (T\wedge \tau _R, X_{T\wedge \tau _R}, v_{T\wedge \tau _R})\right] 
\geq \bar{J}^\lambda (t, X, v). 
\end{eqnarray}
It holds from Lemma~\ref{lem_moment} that $\lim _{R\rightarrow \infty }\tau _R\geq T$ a.s. Therefore, the first term on the left-hand side of (\ref{temp_calc2}) converges to $\E \left [ \int ^T_t(x^2_s/v_s)\, ds \right ]$ as $R\rightarrow \infty $ by the monotone convergence theorem. As for the second term, we observe that 
\begin{eqnarray*}
\E \left[ \bar{J}^\lambda (T\wedge \tau _R, X_{T\wedge \tau _R}, v_{T\wedge \tau _R})\right] = 
\lambda \E \left[ \frac{X^2_T}{v_T}\ ; \ \tau _R\geq T\right]  + 
\E \left[ \bar{J}^\lambda (\tau _R, X_{\tau _R}, v_{\tau _R})\ ; \ \tau _R < T\right] , 
\end{eqnarray*}
that 
\begin{eqnarray*}
\E \left[ \frac{X^2_T}{v_T}\ ; \ \tau _R\geq T\right] \ \longrightarrow \ 
\E \left[ \frac{X^2_T}{v_T}\right], \ \ R\rightarrow \infty 
\end{eqnarray*}
and that 
\begin{eqnarray*}
0&\leq & 
\E \left[ \bar{J}^\lambda (\tau _R, X_{\tau _R}, v_{\tau _R})\ ; \ \tau _R < T\right] \leq 
X^2C_\lambda \E \left [\left (1 + \sup _{t\leq s\leq T}v^{m_\lambda }_s + \sup _{t\leq s\leq T}v^{-m_\lambda }_s\right )1_{\{ \tau _R < T\} }\right ]\\
&\longrightarrow & 
0, \ \ R\rightarrow T 
\end{eqnarray*}
by using Lemma~\ref{lem_moment} and the dominated convergence theorem. Combining these with (\ref{temp_calc2}), we get that 
\begin{eqnarray*}
\E \left[ 
\int ^T_t\frac{x^2_s}{v_s}ds + \frac{\lambda X^2_T}{v_t}\right] \geq \bar{J}^\lambda (t, X, v). 
\end{eqnarray*}
Since ${\bf x}$ is arbitrary, we obtain the assertion. 
\end{proof}

\begin{prop}\label{prop2}$J^\lambda (t, X, v)\leq \bar{J}^\lambda (t, X, v)$. 
\end{prop}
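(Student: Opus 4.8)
The plan is to establish the reverse inequality by exhibiting a single strategy in $\tilde{\mathcal{A}}(t, X)$ that attains the candidate value $\bar{J}^\lambda(t, X, v) = X^2 W^\lambda(t, v)$. The natural choice is the time-$t$ analogue of the process (\ref{def_penalized_optimizer}): put
$$X^\lambda_s = X\exp\left( -\int^s_t v_r W^\lambda(r, v_r)\, dr\right), \quad x^\lambda_s = X^\lambda_s\, v_s W^\lambda(s, v_s), \quad t\leq s\leq T,$$
so that $X^\lambda_s = X - \int^s_t x^\lambda_r\, dr$. This $x^\lambda_s$ is exactly the value of $x$ for which (\ref{temp_min}) holds with equality, i.e.\ the pointwise minimizer of $x\mapsto x^2/v_s - x\,\frac{\partial}{\partial X}\bar{J}^\lambda(s, X^\lambda_s, v_s)$. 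First I would check that ${\bf x}^\lambda = (x^\lambda_s)_{t\leq s\leq T}$ lies in $\tilde{\mathcal{A}}(t, X)$: it is $(\mathcal{F}_s)$-adapted since $W^\lambda$ is a deterministic $C^{1,2}$ function and $(v_s)$ is adapted; it is non-negative because $W^\lambda\geq 0$ by [A2](ii), $v_s > 0$ and $0\leq X^\lambda_s\leq X$; and $\int^T_t x^\lambda_s\, ds = X\bigl(1 - \exp(-\int^T_t v_r W^\lambda(r, v_r)\, dr)\bigr)\leq X$. Finiteness of the corresponding cost follows from the bound $(x^\lambda_s)^2/v_s = (X^\lambda_s)^2 v_s (W^\lambda(s, v_s))^2\leq X^2 C_\lambda^2\, v_s(1 + v_s^{m_\lambda} + v_s^{-m_\lambda})^2$ together with Lemma~\ref{lem_moment} and (\ref{cond_A2ii}).

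Next I would repeat the Ito computation from the proof of Proposition~\ref{prop1}, but now along ${\bf x}^\lambda$ and with the same localization $\tau_R = \inf\{s\geq t\ ;\ |\log v_s|\geq R\}$. The key simplification is that, for this strategy, the drift integrand in (\ref{temp_calc1}) vanishes identically: the second-order part equals $(X^\lambda_s)^2\, v_s(W^\lambda(s, v_s))^2$ by the PDE (\ref{PDE}), while $x^2/v_s - x\frac{\partial}{\partial X}\bar{J}^\lambda(s,X^\lambda_s,v_s)$ evaluated at $x = x^\lambda_s$ equals $-(X^\lambda_s)^2\, v_s(W^\lambda(s, v_s))^2$ by equality in (\ref{temp_min}), and the two cancel. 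Since the integrand of the stochastic-integral term is bounded on $[t, T\wedge\tau_R]$, that term is a true martingale, so taking expectations gives
$$\E\left[ \int^{T\wedge\tau_R}_t \frac{(x^\lambda_s)^2}{v_s}\, ds + \bar{J}^\lambda(T\wedge\tau_R, X^\lambda_{T\wedge\tau_R}, v_{T\wedge\tau_R})\right] = \bar{J}^\lambda(t, X, v).$$

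Finally I would let $R\to\infty$, exactly as in Proposition~\ref{prop1}. Because $\lim_R\tau_R\geq T$ a.s.\ (Lemma~\ref{lem_moment}), the first term increases to $\E[\int^T_t (x^\lambda_s)^2/v_s\, ds]$ by monotone convergence. For the terminal term, splitting over $\{\tau_R\geq T\}$ and $\{\tau_R < T\}$: on $\{\tau_R\geq T\}$ it equals $\lambda(X^\lambda_T)^2/v_T$ and converges by dominated convergence (with $X^2/v_T\in L^1$ by Lemma~\ref{lem_moment}); on $\{\tau_R < T\}$ it is non-negative and at most $X^2 C_\lambda\,\E[(1 + \sup_{t\leq s\leq T} v_s^{m_\lambda} + \sup_{t\leq s\leq T} v_s^{-m_\lambda})1_{\{\tau_R < T\}}]$, which tends to $0$ by (\ref{cond_A2ii}), Lemma~\ref{lem_moment} and dominated convergence. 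Hence $\E[\int^T_t (x^\lambda_s)^2/v_s\, ds + \lambda(X^\lambda_T)^2/v_T] = \bar{J}^\lambda(t, X, v)$, and since $\bar{J}^\lambda(T, X^\lambda_T, v_T) = \lambda(X^\lambda_T)^2/v_T$ is precisely the terminal penalty in $J^\lambda(t, X, v)$ while ${\bf x}^\lambda\in\tilde{\mathcal{A}}(t, X)$, this yields $J^\lambda(t, X, v)\leq \bar{J}^\lambda(t, X, v)$.

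The admissibility verification and the Ito expansion carry over almost verbatim from Proposition~\ref{prop1}; the one place requiring care is the vanishing of the localization boundary term on $\{\tau_R < T\}$ as $R\to\infty$, and there the polynomial growth bound (\ref{cond_A2ii}) on $W^\lambda$, the uniform estimate $0\leq X^\lambda_s\leq X$, and the moment bounds of Lemma~\ref{lem_moment} are exactly what makes the dominated-convergence step go through. Combined with Proposition~\ref{prop1}, this gives $J^\lambda(t, X, v) = \bar{J}^\lambda(t, X, v) = X^2 W^\lambda(t, v)$, in particular establishing (\ref{def_J_lambda}) and the fact that ${\bf x}^\lambda$ is an optimizer of (\ref{def_J_lambda0}), both of which are used in the proof of Theorem~\ref{th_verification}.
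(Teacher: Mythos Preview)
Your proof is correct and follows essentially the same approach as the paper's: define the candidate optimizer via the feedback law $x^\lambda_s = X^\lambda_s v_s W^\lambda(s, v_s)$, verify admissibility, and then repeat the It\^o/localization argument from Proposition~\ref{prop1} with the inequality (\ref{temp_min}) replaced by its equality case. The paper abbreviates the last part with the single line ``by the same calculation as in the proof of Proposition~\ref{prop1}, replacing (\ref{temp_min}) with (\ref{temp_opt})''; you have simply spelled out those details (finiteness of the cost, vanishing of the drift integrand, and the $R\to\infty$ limit) more explicitly.
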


\begin{proof}
Set 
\begin{eqnarray*}
x^\lambda _s = X\exp \left( -\int ^s_tv_rW^\lambda (r, v_r)\, dr \right) v_sW^\lambda (s, v_s). 
\end{eqnarray*}
Then ${\bf x}^\lambda = (x^\lambda _s)_{t\leq s\leq T}$ is $(\mathcal {F}_s)_{t\leq s\leq T}$-adapted, non-negative, and 
\begin{eqnarray*}
X^\lambda _s := X - \int ^s_tx^\lambda _r\, dr = 
X\exp \left( -\int ^s_tv_rW^\lambda (r, v_r)\, dr \right) \leq X, \ \ t\leq s\leq T, 
\end{eqnarray*}
hence ${\bf x}^\lambda \in \tilde{\mathcal {A}}(t, X)$ holds. 
We observe 
\begin{eqnarray*}
\frac{v_s}{2}\frac{\partial }{\partial X}\bar{J}^\lambda (s, X^\lambda _s, v_s) = 
X^\lambda _sv_sW^\lambda (s, v_s) = x^\lambda _s
\end{eqnarray*}
to arrive at 
\begin{eqnarray}\label{temp_opt}
\frac{(x^\lambda _s)^2}{v_s} - x^\lambda _s\frac{\partial }{\partial X}\bar{J}^\lambda (s, X^\lambda _s, v_s) = 
-v_s(X^\lambda _sW^\lambda (s, v_s))^2. 
\end{eqnarray}
By the same calculation as in the proof of Proposition~\ref{prop1}, replacing (\ref{temp_min}) with (\ref{temp_opt}), we see that 
\begin{eqnarray*}
J^\lambda (t, X, v)\leq 
\E \left[ 
\int ^T_t\frac{(x^\lambda _s)^2}{v_s}ds + \frac{\lambda }{v_t}\left( X^\lambda _s \right)^2\right] = 
\bar{J}^\lambda (t, X, v). \qedhere 
\end{eqnarray*}
\end{proof}

From Propositions~\ref{prop1} and \ref{prop2}, we obtain (\ref{def_J_lambda}) and we see that ${\bf x}^\lambda \in \tilde{\mathcal {A}}(0, X_0)$ defined by (\ref{def_penalized_optimizer}) is an optimizer of $J^\lambda (X_0)$. 

By the definition, we see that for each $(t, v)\in [0, T)\times (0, \infty )$, 
the family of functions $W^\lambda (t, v) = J^\lambda (t, 1, v)$, $\lambda > 0$, is non-negative and monotone increasing with respect to $\lambda $. Furthermore, it holds that $\sup _{\lambda }W^\lambda (t, v) < \infty $. Indeed, setting ${\bf x} = (x_s)_{t\leq s\leq T}\in \mathcal {A}(t, 1)$ as $x_s = 1/(T - t)$, we see that 
\begin{eqnarray*}
J^\lambda (t, 1, v) \leq  \E \left[ 
\int ^T_t\frac{x^2_s}{v_s}ds + \frac{\lambda }{v_t}\left( 1 - \int ^T_tx_s\, ds \right)^2\right] \leq 
\frac{1}{T - t}\E \left[ \sup _{0\leq s\leq T}v^{-1}_s\right] \leq \frac{C_1}{T-t}
\end{eqnarray*}
because of Lemma~\ref{lem_moment}. Therefore, the limit $W^\infty (t, v) = \lim _{\lambda \rightarrow \infty }W^\lambda (t, v)$ exists for each $(t, v)\in [0, T)\times (0, \infty )$. Hence, the monotone convergence theorem implies that the limit 
\begin{eqnarray}\label{conv_x}
x^\infty _t = \lim _{\lambda \rightarrow \infty }x^\lambda _t = 
X_0\exp \left( -\int ^t_0v_rW^\infty (s, v_s)\, ds \right) v_tW^\infty (t, v_t) 
\end{eqnarray}
also exists for almost all $(t, \omega )\in [0, T]\times \Omega $. 

To see the admissibility of ${\bf x}^\infty = (x^\infty _t)_{0\leq t\leq T}$, we observe that 
\begin{eqnarray*}
\lambda \E \left[ \frac{1}{v_T}\left( X_0 - \int ^T_0x^\lambda _t\, dt \right)^2\right] \leq 
J^\lambda (X_0)\leq \frac{C_1X^2_0}{T}
\end{eqnarray*}
to arrive at 
\begin{eqnarray*}
\lim _{\lambda \rightarrow \infty }\E \left[ \frac{1}{v_T}\left( X_0 - \int ^T_0x^\lambda _t\, dt \right)^2\right] = 0. 
\end{eqnarray*}
Therefore, Fatou's lemma implies that 
\begin{eqnarray}\label{temp_Fatou}
\liminf _{\lambda \rightarrow \infty }\left( X_0 - \int ^T_0x^\lambda _t\, dt \right)^2 = 0 \ \ \mathrm {a.s.}
\end{eqnarray}
Moreover, (\ref{conv_x}) and the dominated convergence theorem imply that $\lim _{\lambda \rightarrow \infty }\int ^T_0x^\lambda _t\, dt = \int ^T_0x^\infty _t\, dt$ a.s. Combining this with (\ref{temp_Fatou}), we obtain $\int ^T_0x^\infty _t\, dt = X$ a.s., which implies that ${\bf x}^\infty \in \mathcal {A}(X_0)$. 

Finally, we show the optimality of ${\bf x}^\infty $. We have that 
\begin{eqnarray*}
J^\lambda (X_0) = X^2_0W^\lambda (0, v_0) \geq \E \left[ \int ^T_0\frac{(x^\lambda _t)^2}{v_t}dt \right] . 
\end{eqnarray*}
Letting $\lambda \rightarrow \infty $, we obtain 
\begin{eqnarray}\label {temp_ineq}
X^2_0W^\infty (0, v_0) \geq \E \left[ \int ^T_0\frac{(x^\infty _t)^2}{v_t}dt \right] \geq J(X_0) 
\end{eqnarray}
by using Fatou's lemma again. However, (\ref{def_J_lambda}) and (\ref{rel_J}) imply that $X^2_0W^\infty (0, v_0)\leq J(X_0)$. Therefore, (\ref{temp_ineq}) holds, replacing ``$\geq $'' with ``$=$.'' Thus, the proof is complete. \qed 

\section{Volume-dependent permanent MI functions}\label{sec_permanent}

Until now, we have assumed that the permanent MI function $g$ is a linear function with respect to the execution speed and is independent of the market trading volume. Here, we study the case in which both $g$ and $\tilde{g}$ depend on the market trading volume, that is, 
\begin{eqnarray*}
g(v, x) = \frac{\kappa x}{v}, \ \ \tilde{g}(v, x) = \frac{\tilde{\kappa }x}{v}. 
\end{eqnarray*}
In this case, it holds that 
\begin{eqnarray}\label{value_fnc_permanent}
\hat{J}^\mathrm {adap}(X_0) = 
\inf _{{\bf x}\in \hat{\mathcal {A}}(0, X_0)}\E \left[ 
\int ^T_0\frac{\kappa X_tx_t + \tilde{\kappa }x^2_t}{v_t} dt \right] , 
\end{eqnarray}
where $\hat{\mathcal {A}}(t, X)$ is defined as in (\ref{def_adm_str2}) 
to treat the opportunity when admissible strategies include purchasing orders (see Remark~\ref{rem_non_negativity}). Here, we confirm the boundedness of the process of the security shares held for each strategy in $\hat {\mathcal {A}}(t, X)$. 

\begin{lemma}\label{bdd_X}For each ${\bf x} = (x_s)_{t\leq s\leq T}\in \hat {\mathcal {A}}(t, X)$, define $(X_s)_{t\leq s\leq T}$ by {\rm (\ref{def_X_t})}. Then it holds that $\mathop {\rm esssup}_{s, \omega }|X_s(\omega )| < \infty $. 
\end{lemma}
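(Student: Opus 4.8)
The key point is that any strategy in $\hat{\mathcal{A}}(t,X)$ has execution speed bounded below (by definition, $\mathop{\rm essinf}_{s,\omega} x_s(\omega) > -\infty$), and the cumulative constraint $\int_t^T x_s\,ds \leq X$ bounds the integral from above. I would combine these two facts to sandwich $X_s = X - \int_t^s x_r\,dr$ from both sides, uniformly in $s$ and $\omega$.

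First I would fix ${\bf x}=(x_s)_{t\leq s\leq T}\in\hat{\mathcal{A}}(t,X)$ and let $c := \mathop{\rm essinf}_{s,\omega} x_s(\omega)$, which is finite by the definition (\ref{def_adm_str3}) of $\hat{\mathcal{A}}(t,X)$. Then for $dr\otimes dP$-a.e.\ $(r,\omega)$ we have $x_r(\omega)\geq c$, so for every $s\in[t,T]$,
\begin{eqnarray*}
X_s = X - \int_t^s x_r\,dr \leq X - c(s-t) \leq X + |c|\,(T-t) \leq X + |c|\,T =: M_1,
\end{eqnarray*}
which gives the upper bound $X_s\leq M_1$ a.s., uniformly in $s$.

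For the lower bound I would argue that the remaining quantity $X_s$ cannot drop below $X$ minus the total amount that may be sold over $[t,T]$, and the latter is at most $X + |c|\,T$ as well: since $x_r\geq c$, we have $\int_s^T x_r\,dr \geq c(T-s) \geq -|c|\,T$, hence
\begin{eqnarray*}
X_s = X - \int_t^T x_r\,dr + \int_s^T x_r\,dr \geq 0 + (-|c|\,T) = -|c|\,T =: M_2,
\end{eqnarray*}
where I used $X - \int_t^T x_r\,dr = X_T \geq 0$, which holds a.s.\ because $\int_t^T x_r\,dr \leq X$ by admissibility. Combining, $|X_s(\omega)| \leq \max(M_1, |M_2|)$ for $dr\otimes dP$-a.e.\ $(s,\omega)$ (and by continuity of $s\mapsto X_s$, for all $s$ on a full-measure set of $\omega$), so $\mathop{\rm esssup}_{s,\omega}|X_s(\omega)| < \infty$.

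I do not anticipate a genuine obstacle here; the statement is essentially a bookkeeping consequence of the two defining inequalities of $\hat{\mathcal{A}}(t,X)$. The only mild care needed is the interchange between ``$dr\otimes dP$-a.e.'' statements about $x_r$ and pointwise-in-$s$ statements about $X_s$, which is handled by noting $s\mapsto X_s$ is (absolutely) continuous for a.e.\ fixed $\omega$, so an inequality holding for a.e.\ $s$ extends to all $s$.
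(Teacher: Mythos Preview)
Your proof is correct and follows essentially the same approach as the paper: both arguments use the finite essential infimum of $x$ to bound $X_s$ from above, and the admissibility condition $X_T\geq 0$ together with that lower bound to control $X_s$ from below. The paper phrases this via the decomposition $x_s=x_s^+-x_s^-$ and bounds $\int_t^s x_r^+\,dr$, whereas you work directly with $c=\mathop{\rm essinf}x_s$ and write $X_s=X_T+\int_s^T x_r\,dr$, but the underlying idea and resulting bounds are the same.
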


\begin{proof} 
Denote $x^+_s = \max \{ x_s, 0 \}$ and $x^-_s = -\min \{ x_s, 0 \}$ so that $x_s = x^+_s - x^-_s$ and $|x_s| = x^+_s + x^-_s$. By the assumption, we have $0\leq \eta ^-_s\leq CT$ and $X_s = X - \eta ^+_s + \eta ^-_s \leq X - \eta ^+_s + CT$ for each $s\in [t, T]$ a.s., where 
\begin{eqnarray*}
\eta ^\pm _s = \int ^s_0x^\pm _r\, dr, \ \ 
C = \mathop {\rm esssup}_{s, \omega }x^-_s(\omega ) < \infty . 
\end{eqnarray*}
In particular, it follows that $0 \leq X_T \leq X - \eta ^+_T + CT$, hence $0\leq \eta ^+_s \leq \eta ^+_T \leq X + CT$ for each $s\in [t, T]$ a.s. Now our assertion is obvious. 
\end{proof}


When $(v_t)_{0\leq t\leq T}$ is given as geometric Brownian motion (\ref{GBM}), a similar argument to that in Appendix~\ref{sec_proofs}, replacing the assumption (\ref{cond_A2ii}) with (\ref{cond_A2ii_gen}), leads us to the following theorem. 

\begin{theorem}\label{th_permanent}
Put $\tilde{\mu } = \mu - \sigma ^2/2$ and $D = \tilde{\mu }^2 - 2\tilde{\mu }\kappa / \tilde{\kappa }$. An optimizer $\hat {\bf x}^\mathrm {adap} = (\hat{x}^\mathrm {adap}_t)_{0\leq t\leq T}$ of $(\ref{value_fnc_permanent})$ is given as follows. 
\begin{itemize}
 \item [\rm {(i)}] If $D < 0$ and $\gamma T < 2\pi $, 
\begin{eqnarray*}
\hat{x}^\mathrm {adap}_t = 
\frac{X_0e^{\tilde{\mu }t/2}}{2\sin (\gamma T/2)}\left\{ 
\gamma \cos \left( \frac{\gamma }{2}(T - t) \right) - 
\tilde{\mu }\sin \left( \frac{\gamma }{2}(T - t) \right) \right\} . 
\end{eqnarray*}
 \item [\rm {(ii)}] If $D = 0$, 
\begin{eqnarray*}
\hat{x}^\mathrm {adap}_t = 
X_0e^{\tilde{\mu }t/2}\left\{ \frac{1}{T} - \frac{\tilde{\mu }}{2}\left( 1 - \frac{t}{T}\right)  \right\} . 
\end{eqnarray*}
 \item [\rm {(iii)}] If $D > 0$, 
\begin{eqnarray*}
\hat{x}^\mathrm {adap}_t = 
\frac{X_0}{2(e^{\gamma T} - 1)}\left\{ 
(\tilde{\mu }+ \gamma )e^{(\tilde{\mu } + \gamma )t/2} - (\tilde{\mu } - \gamma )e^{(\tilde{\mu } - \gamma )t/2 + \gamma T} \right\} . 
\end{eqnarray*}
\end{itemize}
\end{theorem}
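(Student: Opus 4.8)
The plan is to rerun the penalization-and-verification scheme of Appendix~\ref{sec_proofs} for the running cost appearing in (\ref{value_fnc_permanent}), working throughout with the relaxed admissibility classes $\hat{\mathcal{A}}(t,X)$ from Remark~\ref{rem_non_negativity} (so that buy orders are allowed) and using that geometric Brownian motion makes the penalized value function an explicit Riccati ODE. For $\lambda>0$ put
\begin{eqnarray*}
J^\lambda(t,X,v)=\inf_{{\bf x}\in\hat{\mathcal{A}}(t,X)}\E\left[\int_t^T\frac{\kappa X_sx_s+\tilde\kappa x_s^2}{v_s}\,ds+\frac{\lambda}{v_T}X_T^2\right],
\end{eqnarray*}
with $(X_s)$ as in (\ref{def_X_t}) and $(v_s)$ the geometric Brownian motion started from $v$ at time $t$. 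Since the SDE of $(v_s)$ is linear the law of $(v_s/v)_{s\ge t}$ does not depend on $v$ and $\hat{\mathcal{A}}(t,X)$ rescales with $X$, so $J^\lambda(t,X,v)=X^2v^{-1}\varphi^\lambda(t)$ for a function $\varphi^\lambda$ of $t$ alone. Inserting the candidate $\bar J^\lambda(t,X,v)=X^2v^{-1}\varphi^\lambda(t)$ into the HJB equation, using $\mathcal{L}_v(v^{-1})=-\tilde\mu\,v^{-1}$ for the generator $\mathcal{L}_v$ of $(v_s)$ (this is where geometric Brownian motion enters), and performing the scalar minimization over the execution speed, now over all of $\Bbb{R}$, attained at $x^\ast=X(2\varphi^\lambda-\kappa)/(2\tilde\kappa)$, one is led to
\begin{eqnarray*}
\dot\varphi^\lambda=\tilde\mu\,\varphi^\lambda+\frac{(2\varphi^\lambda-\kappa)^2}{4\tilde\kappa},\qquad\varphi^\lambda(T)=\lambda,
\end{eqnarray*}
the terminal condition reflecting the penalty $\lambda X_T^2/v_T$.

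Next I would establish $J^\lambda=\bar J^\lambda$ by copying Propositions~\ref{prop1} and~\ref{prop2}. For the lower bound, apply It\^o's formula to $\bar J^\lambda(s,X_s,v_s)$ along an arbitrary ${\bf x}\in\hat{\mathcal{A}}(t,X)$, localize by $\tau_R=\inf\{s:|\log v_s|\ge R\}$, drop a nonnegative drift term coming from the HJB inequality, and let $R\to\infty$ with the help of Lemma~\ref{lem_moment}, Lemma~\ref{bdd_X} (which makes $(X_s)$ essentially bounded for strategies in $\hat{\mathcal{A}}$) and the growth bound (\ref{cond_A2ii_gen}) for $W^\lambda(t,v)=\varphi^\lambda(t)/v$, which holds because $\varphi^\lambda$ is bounded on $[0,T]$. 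The one genuinely new point relative to Appendix~\ref{sec_proofs} is that $\frac{\kappa X_sx_s+\tilde\kappa x_s^2}{v_s}$ is no longer sign-definite; but $\int_t^T|x_s|\,ds$ is a.s.\ bounded — as in the proof of Lemma~\ref{bdd_X}, from the essinf bound on $x^-_s$ and $\int_t^Tx_s\,ds\le X$ — so $\int_t^T\frac{|\kappa X_sx_s|}{v_s}\,ds$ is dominated by a multiple of $\sup_sv_s^{-1}$ and one interchanges limit and expectation by dominated convergence, the nonnegative part $\tilde\kappa x_s^2/v_s$ being handled by monotone convergence as before. For the upper bound, insert the feedback control $x^\lambda_s=X^\lambda_s(2\varphi^\lambda(s)-\kappa)/(2\tilde\kappa)$, $X^\lambda_s=X\exp(-\int_t^s(2\varphi^\lambda(r)-\kappa)/(2\tilde\kappa)\,dr)$; since $\varphi^\lambda$ is continuous on $[0,T]$, both $X^\lambda$ and $x^\lambda$ are bounded and $|X^\lambda_s|\le|X|$, so $x^\lambda\in\hat{\mathcal{A}}(t,X)$ and the It\^o identity becomes an equality. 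The passage $\lambda\to\infty$ then follows the body of Appendix~\ref{sec_proofs} verbatim: $\lambda\mapsto J^\lambda$ is nondecreasing and, testing with $x_s=X/(T-t)$, bounded, so $\varphi^\infty(t):=\lim_\lambda\varphi^\lambda(t)$ exists for $t\in[0,T)$ and $x^\infty_t:=\lim_\lambda x^\lambda_t$ exists $dt\otimes dP$-a.e.; from $\lambda\,\E[v_T^{-1}(X_0-\int_0^Tx^\lambda_t\,dt)^2]\le J^\lambda(X_0)$ and the boundedness of $\sup_\lambda J^\lambda(X_0)$, with Fatou's lemma and dominated convergence, ${\bf x}^\infty$ lies in $\hat{\mathcal{A}}(0,X_0)$, satisfies the sell-off condition $\int_0^Tx^\infty_t\,dt=X_0$, and is an optimizer of (\ref{value_fnc_permanent}); note that $x^\infty$ may change sign, which is exactly why the relaxed class is needed here.

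It then remains to make $x^\infty$ explicit. The linearization $\varphi^\lambda=-\tilde\kappa\,\dot u^\lambda/u^\lambda$ turns the Riccati equation into $\ddot u^\lambda=(\tilde\mu-\kappa/\tilde\kappa)\dot u^\lambda-\frac{\kappa^2}{4\tilde\kappa^2}u^\lambda$, whose characteristic discriminant is exactly $D=\tilde\mu^2-2\tilde\mu\kappa/\tilde\kappa$; the condition $\varphi^\lambda(T)=\lambda$ reads $\dot u^\lambda(T)/u^\lambda(T)=-\lambda/\tilde\kappa$, and $\lambda\to\infty$ corresponds to $u^\infty(T)=0$. Using $\dot X^\infty_t/X^\infty_t=-(2\varphi^\infty(t)-\kappa)/(2\tilde\kappa)=\dot u^\infty(t)/u^\infty(t)+\kappa/(2\tilde\kappa)$ together with $X^\infty_0=X_0$, one obtains $X^\infty_t=X_0\,e^{\tilde\mu t/2}h(t)/h(0)$, where $h(t)$ is $\sinh(\tfrac{\gamma}{2}(t-T))$ if $D>0$ (with $\gamma=\sqrt D$), $(t-T)$ if $D=0$, and $\sin(\tfrac{\gamma}{2}(t-T))$ if $D<0$ (with $\gamma=\sqrt{-D}$); then $\hat x^{\mathrm{adap}}_t=-\dot X^\infty_t$ yields formulas (i)--(iii) after elementary hyperbolic/trigonometric rearrangement. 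In the oscillatory case $D<0$ one must check in addition that $u^\lambda$ has no zero on $[0,T]$ (so that $\varphi^\lambda$ is a genuine $C^1$ solution there and $J^\lambda$ is finite) and that $\sin(\tfrac{\gamma}{2}(T-t))\ne0$ on $[0,T)$; both reduce to the hypothesis $\gamma T<2\pi$, and that is the only place where this assumption is used.

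The main obstacle is running the limiting arguments of Appendix~\ref{sec_proofs} in the presence of a sign-indefinite running cost and of controls that are merely essinf-bounded below rather than nonnegative: this is what forces the use of Lemma~\ref{bdd_X}, the full two-sided estimates of Lemma~\ref{lem_moment}, and the generalized growth bound (\ref{cond_A2ii_gen}) in place of (\ref{cond_A2ii}). A secondary, more technical difficulty is the global-in-time solvability of the Riccati equation when $D<0$, which is precisely what the condition $\gamma T<2\pi$ secures.
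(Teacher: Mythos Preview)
Your proposal is correct and follows exactly the route the paper indicates: the paper's entire proof is the single sentence ``a similar argument to that in Appendix~\ref{sec_proofs}, replacing the assumption (\ref{cond_A2ii}) with (\ref{cond_A2ii_gen}),'' and you have supplied the details that sentence suppresses --- the penalized HJB, the reduction to the scalar Riccati equation via the separable ansatz $\bar J^\lambda=X^2\varphi^\lambda(t)/v$, the verification (analogues of Propositions~\ref{prop1}--\ref{prop2}) using Lemma~\ref{bdd_X} for boundedness of $X_s$, the $\lambda\to\infty$ limit, and the explicit integration by the substitution $\varphi^\lambda=-\tilde\kappa\,\dot u^\lambda/u^\lambda$.

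One small caveat: your assertion that the two-sided bound (\ref{cond_A2ii_gen}) holds for $W^\lambda(t,v)=\varphi^\lambda(t)/v$ ``because $\varphi^\lambda$ is bounded on $[0,T]$'' is not quite right as written. Boundedness of $\varphi^\lambda$ gives the upper bound in (\ref{cond_A2ii_gen}) with $m_\lambda=1$, but the uniform lower bound $W^\lambda\ge -C'$ would force $\varphi^\lambda\ge 0$, which can fail (e.g.\ for small $\lambda$). This does not damage the argument: in the GBM case the stochastic integral arising in the It\^o expansion of $\bar J^\lambda$ is a genuine martingale (its integrand is bounded by a constant times $X_s^2/v_s$, and $\E\!\int_0^T v_s^{-2}\,ds<\infty$ by Lemma~\ref{lem_moment} while $X_s$ is essentially bounded by Lemma~\ref{bdd_X}), so the localization by $\tau_R$ and hence the pointwise lower bound on $W^\lambda$ are not actually needed. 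Alternatively, one can keep the localization and note that $|W^\lambda(t,v)|\le \|\varphi^\lambda\|_\infty\,v^{-1}$ on both sides, which together with Lemma~\ref{lem_moment}, Lemma~\ref{bdd_X}, and $P(\tau_R<T)\to 0$ still makes $\E[\bar J^\lambda(\tau_R,X_{\tau_R},v_{\tau_R});\tau_R<T]\to 0$. The incidental claim $|X^\lambda_s|\le|X|$ is also unnecessary (and may fail if $2\varphi^\lambda-\kappa$ changes sign); all you need there is that $x^\lambda$ and $X^\lambda$ are bounded and $X^\lambda_T>0$, which follow from continuity of $\varphi^\lambda$ on $[0,T]$.
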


The above theorem implies that the optimal strategy is static in each case, but is no longer the VWAP strategy. Moreover, in some cases, the optimal execution speed may become negative (i.e., the optimal selling strategy contains buying behaviors).

\section*{Acknowledgment}
The author would like to thank the anonymous referee for many valuable comments and suggestions that have improved the quality of the paper.

\end{document}